\documentclass[prodmode,acmec]{ec-acmsmall}

\acmVolume{X}
\acmNumber{X}
\acmArticle{X}
\acmYear{2013}
\acmMonth{2}
\usepackage[numbers]{natbib}
\bibliographystyle{acmsmall}

\usepackage{amsmath}
\usepackage{amsfonts, amssymb}
\usepackage{algorithm}
\usepackage[noend]{algpseudocode}
\usepackage{subfigure}
\usepackage{color}
\usepackage{multirow}

\usepackage{tikz}
\usetikzlibrary{shadows,arrows,decorations,decorations.shapes,backgrounds,shapes}

\usepackage[font=small,format=plain,labelfont=bf,up]{caption}
\setlength{\captionmargin}{30pt}

\newcount\Comments
\Comments=0
\definecolor{darkgreen}{rgb}{0,0.7,0}
\newcommand{\kibitz}[2]{\ifnum\Comments=1\textcolor{#1}{#2}\fi}
\newcommand{\ap}[1]{\kibitz{blue} {[AP: #1]}}
\newcommand{\os}[1]{\kibitz{darkgreen} {[OS: #1]}}
\newcommand{\jb}[1]{\kibitz{red} {[JB: #1]}}
\newcommand{\sk}[1]{\kibitz{orange} {[SK: #1]}}

%\definecolor{darkgreen}{rgb}{0,0.7,0}
%\definecolor{purple}{rgb}{1,0,1}
%\newcommand{\ignore}[1]{}
%\newcommand{\kibitz}[2]{\ifnum\Comments=1\textcolor{#1}{#2}\fi}
%\newcommand{\ap}[1]{\kibitz{blue}     {[AP: #1]}}
%\newcommand{\os}[1]  {\kibitz{red}      {[OS: #1]}}
%\newcommand{\jb}[1]  {\kibitz{darkgreen}     {[JB: #1]}}
%\newcommand{\sk}[1]  {\kibitz{purple}     {[SK: #1]}}

%\newtheorem{theorem}{Theorem}
%\newtheorem{lemma}[theorem]{Lemma}
%\newtheorem{question}[theorem]{Question}
%\newtheorem{observation}[theorem]{Observation}
%\newtheorem{corollary}[theorem]{Corollary}
%\newtheorem{conjecture}[theorem]{Conjecture}
\newtheorem{claim}[theorem]{Claim}

\newcommand{\cut}[1]{}

\newcommand{\PasswordSpace}{\mathcal{P}}
\newcommand{\poly}{\ensuremath{\mathrm{poly}}}
\providecommand{\e}[1]{\ensuremath{\times 10^{#1}}}

\def \QED {\hfill{$\Box$}}

\newenvironment{proofof}[1]{\noindent {\em Proof of #1.  }}{\QED}

\newenvironment{remindertheorem}[1]{\medskip \noindent {\bf Reminder of Theorem #1.  }\em}{}

\newenvironment{reminderclaim}[1]{\medskip \noindent {\bf Reminder of Claim #1.  }\em}{}

%Title
\newcommand{\TheTitle}{Optimizing Password Composition Policies}

\begin{document}

% Page heads
\markboth{J. Blocki et al.}{\TheTitle}

%%%%%%%%%%%%%%%%%%%%%%%%%%%%%%%%%%%%%%%%%%%%%%%%%%%%%%%%%%%%%%%%%%%
% title
%%%%%%%%%%%%%%%%%%%%%%%%%%%%%%%%%%%%%%%%%%%%%%%%%%%%%%%%%%%%%%%%%%%
\title{\TheTitle} 

%%%%%%%%%%%%%%%%%%%%%%%%%%%%%%%%%%%%%%%%%%%%%%%%%%%%%%%%%%%%%%%%%%%
% authors
%%%%%%%%%%%%%%%%%%%%%%%%%%%%%%%%%%%%%%%%%%%%%%%%%%%%%%%%%%%%%%%%%%%

\author{
JEREMIAH BLOCKI  \affil{Carnegie Mellon University} 
SARANGA KOMANDURI \affil{Carnegie Mellon University}
ARIEL D. PROCACCIA  \affil{Carnegie Mellon University} 
OR SHEFFET\affil{Carnegie Mellon University}
}

\begin{abstract}
A password composition policy restricts the space of allowable passwords to eliminate weak passwords that are vulnerable to statistical guessing attacks. 
Usability studies have demonstrated that existing password composition policies can sometimes result in weaker password distributions; hence a more \emph{principled} approach is needed. We introduce the \emph{first theoretical model} for optimizing password composition policies. We study the computational and sample complexity of this problem under different assumptions on the structure of policies and on users' preferences over passwords. Our main positive result is an algorithm that -- with high probability --- constructs almost optimal policies (which are specified as a union of subsets of allowed passwords), and requires only a small number of samples of users' preferred passwords. We complement our theoretical results with simulations using a real-world dataset of $32$ million passwords. 
\end{abstract}

%\category{C.2.2}{Computer-Communication Networks}{Network Protocols}

\terms{Algorithms, Economics, Security, Theory}

\keywords{Password composition policy, Sampling, Computational complexity}

\acmformat{Jeremiah Blocki, Saranga Komanduri, Ariel D. Procaccia
and Or Sheffet, 2013. \TheTitle.}

\begin{bottomstuff}
Authors' addresses: J. Blocki, Computer Science Department,
Carnegie Mellon University, email: \texttt{jblocki@cs.cmu.edu}; S. Komanduri, Human Computer Interaction Institute, Carnegie Mellon University, email: \texttt{sarangak@cs.cmu.edu}; A. D. Procaccia, Computer Science Department, Carnegie Mellon University, email: \texttt{arielpro@cs.cmu.edu}; O. Sheffet, Computer Science Department, Carnegie Mellon University, email: \texttt{osheffet@cs.cmu.edu}. \\
This research was supported in part by the National Science Foundation Science and Technology TRUST, by the National Science Foundation under grants DGE-0903659, CNS-1116776, CCF-1101215 and CCF-1116892, by CyLab at Carnegie Mellon under grants DAAD19-02-1-0389 and
W911NF-09-1-0273 from the Army Research Office, by the AFOSR MURI on Science of Cybersecurity, by a gift from Microsoft Research and by a NSF Graduate Research Fellowship.

\end{bottomstuff}

\maketitle
\section{Introduction} \label{sec:Introduction}
\def \newintro {}
\ifdefined\newintro
{
Imagine a web surfer, an online shopper, or a reviewer in a prominent CS and Economics conference\footnote{All three might be the same person.} who logs on for the first time to a server; so that she can sign up for some service,  place a shopping order, or view a list of assigned papers. Such a user registers on the server by choosing a username and picking a password. Naturally, our user's first attempt at picking a password is her favorite combination `\verb"123456"', which the server declines. She then has to pick a password that follows certain guidelines: of suitable length, involving lower- and upper-case letters, with numbers or special characters, etc. Such \emph{password composition policies} defend against the ``first line'' of attack~-- guessing attacks by uninformed attackers (attackers with no previous knowledge of the user whose account they are trying to break into). 

Password composition policies are a necessity because --- without them --- user-selected passwords are predictable. Indeed, many unrestricted users would select simple passwords like `\verb"123456"', `\verb"password"' and `\verb"letmein"' \cite{mostPopularPasswords2012}. Furthermore, this issue is of great importance to today's economy. Passwords are commonly used in electronic commerce to protect financial assets. In fact, the passwords themselves have financial value. Symantec reported that compromised passwords are sold for between \$4 and \$30 on the black market \cite{passwordBlackMarket}, and a 2004 Gartner case study \cite{costOfPasswordReset} estimated that it cost a large firm over \$17 per password-reset call. % (due to users forgetting their passwords). 
Nevertheless, existing password composition policies are typically not principled, and do not necessarily result in less common passwords. For example, studies show that users respond to restrictions in predictable ways~ \cite{komanduri2011passwords}, or pick weaker passwords due to user-fatigue~\cite{clair2006password,kruger2008empirical}.

In this paper, we initiate the \emph{algorithmic} study of password composition policies. Such policies restrict the space of passwords to a subset of allowed passwords, and force each user to pick a password in this subset. Thus, $n$ users induce a distribution over passwords where for a password $w$, $\Pr[w] = \tfrac 1 n \left|\left\{i~:~ i\textrm{ picks } w \right\}\right|$. By declaring different subsets of allowed passwords, different password composition policies induce different distributions. Our work formalizes and addresses the algorithmic problem a server administrator faces when designing a password composition policy; we ask:

\begin{quote}
\emph{In what settings can the information about the users' preferences over passwords allow us to design a password composition policy that is \emph{guaranteed} to induce a password distribution as close to uniform as possible?}
\end{quote}

%We give multiple answers to this question, some positive and some negative (see Section~\ref{subsec:results} for an overview). 
We wish to stress at this point that we do not take a cryptographic approach to the problem: we do not design a protocol aimed at amplifying a password's strength, nor do we rely on standard cryptographic assumptions or techniques in designing our password composition policies. Single-factor authentication does not defend against an attacker who learns about the most probable password from an external source. Furthermore, because password systems often allow users multiple attempts in entering their password, an attacker can make a small number of guesses with impunity. Therefore, we instead focus on the design and analysis of algorithms for optimizing the password composition policy's induced distribution over passwords, and in our theoretical results compare the performance of our algorithm to the optimal policy among exponentially many potential policies in the \emph{worst case}. 
%We also do not take the point of view of usable security research,  %aimed at finding a policy that does reasonable well in practice, often by proposing some ad-hoc measurement of the policy's strength and testing it empirically (e.g. a user study on Mechanical Turk).
%as we do not worry about  
%Also, from a security viewpoint much emphasis is put on 
%which focuses on trading off the policy's strength and simplicity 
%-- such as using a few coherent rules, preventing user-fatigue or avoiding users' refusal to participate past a few trials. These are all issues completely outside the scope of this paper.
%Instead, our work focuses on the algorithmic questions that arise from the password composition policy problem. We study the complexity of the problem given $m$ arbitrary rules, 
%and compare the performance of our algorithm to the optimal policy among exponentially many potential policies in the \emph{worst-case} (in contrast to comparing ourselves to no policy or some existing policy on one specific dataset). \os{No more disputable intro changes from here on.}

\subsection{Our Model}

%We propose the first theoretical model for optimizing password composition policies. 
We study the algorithmic problem of optimizing password composition policies along multiple dimensions: the goal, the user model, and the policy structure.
%the users' model (the way users pick their passwords); 
%%the exact goal the likelihood of the most common password or the most common $k$ password; and the nature of the optimal solution we seek -- a union of banned rules or a union of allowed rules.

\smallskip
\noindent\textbf{Goal.} We focus on designing a policy that maximizes the minimum-entropy of the resulting password distribution. Specifically, we assume the server deals with $n$ users, each picking a password from some space of passwords $\PasswordSpace$ that respects the server's password composition policy. These $n$ passwords form a distribution over the domain of all allowed passwords and our goal is to minimize the probability of the most likely password. This is a natural goal (see Section~\ref{sec:disc}), as opposed to maximizing the Shannon-entropy of the distribution, which for example is still high even if half the people choose the same password and the other half choose a password uniformly at random from $\PasswordSpace$. From a security standpoint, the minimum entropy represents the fraction of accounts that could be compromised in one guess. For example, an adversary would be able to crack %$0.22\%$ of MySpace passwords \cite{schneier2006myspace}, and 
$0.9\%$ of RockYou passwords \cite{rockYouPasswords} with only one guess. 
%Alternatively, should the attacker break into one account, then the likelihood of the most common password represents the fraction of users whose account is compromised in the worst case. 
Alternatively, should the attacker attempt to break into only one account, the minimum entropy represents the likelihood that the account is compromised on the first guess.
%\sk{I did not understand the previous version of the previous sentence, so I rewrote and kept the original in comments.}
We also consider a slightly stronger goal of minimizing the fraction of accounts that could be compromised using $k$ guesses, that is, the overall probability of the $k$ most likely passwords \cite{boztas1999entropies}.

\smallskip
\noindent\textbf{User model.} 
We consider two models for how users select passwords when presented with a password composition policy. 

In the \emph{ranking model}, each user has an implicit ranking over passwords, from the most preferred to the least preferred. Given a password policy, each user selects the highest-ranking password among those allowed by the policy. There is a distribution over the space of rankings that determines the fraction of users with each possible ranking. Note that for any password composition policy, such a distribution over rankings induces a distribution over the most preferred \emph{allowed} passwords. 

In the \emph{normalization model}, there is a distribution $\mathcal{D}$ over the space of all passwords. This distribution tells us the likelihood that an unrestricted user would select a given password. Given a password composition policy, $\mathcal{D}$ induces a new distribution over the allowed passwords (which can be obtained by normalizing the probabilities under $\mathcal{D}$ of the allowed passwords). When we ban a password the fraction of users that prefer each allowed password grows; the natural interpretation is that users who preferred an allowed password still use that password, but users who preferred a banned password are redistributed among the allowed passwords according to the induced distribution. 
 
As we show, the normalization model is strictly more restrictive than the ranking model: any distribution in the normalization model can be simulated in the ranking model, but there exist hardness results for the ranking model that do not hold for the normalization model.

%In the normalization model we assume that each user has an initial distribution over passwords and that the distribution is re-normalized each time a password is banned. The ranking model is more general. In the ranking model we assume that each user (implicitly) has an ordered list of passwords, and the password selection function simply returns the first allowed password.

%\smallskip
%\noindent\textbf{The complexity of the space of all passwords.} We assume all users pick passwords within a finite domain $\PasswordSpace$ (e.g., passwords are no longer than $20$ ASCII characters), which we refer to as the space of all passwords; its size is denoted by $N$. We first assume that the policy may ban or allow any individual password in $\PasswordSpace$. However, the running time of algorithms in this setting inevitably depends at least linearly on $N$, which may be very large. We then consider the best policy that is restricted to manipulation of a given set of \emph{rules} --- each rule is simply a predefined subset of potential passwords. These rules are given to us as part of the problem (see Section~\ref{sec:disc} for a discussion of this point). The goal of the algorithm is to find the union of rules that minimizes the likelihood of the most common password (or most common $k$ words). 
%\ap{maybe rewrite this not in terms of complexity}

\smallskip
\noindent\textbf{Policy structure.} 
We consider the best policy that is restricted to manipulation of a given set of \emph{rules} --- each rule is simply a predefined subset of potential passwords. These rules are given to us as part of the problem (see Section~\ref{sec:disc} for a discussion of this point). 
If we interpret a rule as a subset of banned passwords (e.g., passwords shorter than seven characters), its complement (e.g., passwords of at least seven characters) can be interpreted as a subset of allowed passwords. As such, when we take the union of rules we get either a set of banned passwords (\emph{negative rules}) or allowed passwords (\emph{positive rules}); this is our password composition policy. While the distinction between the two cases may at first seem a mere technicality, it is in fact quite significant due to the following observation. If we ban the union of rules then in order to ban a password that was picked by too many users, we may ban any rule that contains this password. In contrast, if we allow a union of rules then in order to ban this password we must not allow \emph{any} rule that contains it. In other words, when our goal is to discard a password in the negative rules setting, we have multiple ways to do so. When our goal is to discard a password in the positive rules setting, we have only one way to do so --- excluding all rules that allow this password. As we shall see, this seemingly small difference leads to a clear separation between the two scenarios in terms of the complexity of designing optimal policies. 

We pay special attention to the case where each password has its own singleton rule. In this setting, a policy can be interpreted as a ``blacklist'' of banned passwords that do not necessarily share common characteristics. Note that when each password has its own singleton rule, it does not matter whether these rules are positive or negative.

\subsection{Our Results}
\label{subsec:results}

As we noted above, a password composition policy induces a distribution over most preferred passwords (in both user models). Hence we can study algorithms that \emph{sample} these distributions. One can obtain such samples by asking random users to choose a password that is constrained by a certain policy. Clearly, though, we need the number of samples to be ``small''. The size of the space of all passwords $\PasswordSpace$ --- which we denote by $N$ --- is typically very large (e.g., $\PasswordSpace$ can include all passwords that are no longer than $32$ ASCII characters). We wish to maximize entropy using a number of samples that does not depend on $N$. 

Before tackling this goal directly, we study the problem in a simpler setting where the preferences of all users are given to us as input (i.e., there is no uncertainty). In particular, here $\PasswordSpace$ is a part of the input and algorithms are allowed to run in time polynomial in $N$. The computational complexity of problems in this setting informs their study in the sampling setting: it is hopeless to design efficient sampling algorithms for problems that are computationally hard, but computationally tractable problems may (or may not) have efficient sampling algorithms. 

Table~\ref{tab:Usability} summarizes our complexity results. The parameter $k$ refers to our optimization target: minimizing the likelihood of the $k$ most likely passwords. \os{Added the following sentence.} Some results are direct corollaries of others --- using the fact that singleton rules are a special case of positive rules and the fact that the normalization model is a special case of the ranking model (see Section~\ref{sec:Models}). Looking at the table one immediately notices a clear separation between negative rules and positive rules: optimization using the latter is much easier. 

\renewcommand{\arraystretch}{1.3}
\renewcommand{\tabcolsep}{4pt}
\begin{table}[t]
\centering
 \tbl{Summary of Complexity Results.}{
\begin{tabular}{| l | p{1in} | p{1in}  ||  p{1in} | p{1in} | }
\cline{2-5}
\multicolumn{1}{c|}{}& \multicolumn{2}{c||}{\textbf{Ranking Model}} & \multicolumn{2}{c|}{\textbf{Normalization Model}} \\
\cline{2-5}
\multicolumn{1}{c|}{}& \textbf{Constant $k$}  & \textbf{Large $k$}  & \textbf{Constant $k$} & \textbf{Large $k$} \\
\hline
\textbf{Singleton rules} & P & NP-Hard (Thm \ref{thm:hardnessKisParameter})\newline APX-Hard w/ UGC (Thm~\ref{thm:UGCHardness}) & P & P (Thm \ref{thm:NrmPrbEfficientAlgAnyK}) \\
\hline
\textbf{Positive rules} & P (Thm \ref{thm:ConstantKAlg} ) & NP-Hard & P & NP-Hard (Thm~\ref{thm:HardnessAddingSubsetsLargeK})  \\
\hline 
\textbf{Negative rules} & $n^{1/3}$-approx is NP-hard (Thm~\ref{thm:HardnessOfApproximationOptimalCombinationOfSubsets}) & NP-Hard  & NP-Hard (Thm \ref{thm:NrmPrbHardnessOptimizeP1}) & NP-Hard \\
\hline
\end{tabular}}
\label{tab:Usability}
\end{table}

We therefore focus on positive rules in our attempt to design an efficient sampling algorithm. Our main result is the best one could hope for in this setting. We design an algorithm that works in the more general ranking model, and finds a policy whose entropy is $\epsilon$-close to optimal with probability $1-\delta$, for any given $\epsilon,\delta>0$. The required number of samples is polynomial in $1/\epsilon$, $\log(1/\delta)$, and the number of positive rules $m$. We can assume that $m$ is small, because each rule corresponds to a subset of passwords that can be concisely described to users.

%\ap{Add a paragraph about the experimental results when they become available.}
%\sk{added para about experimental results}
These results can be applied in a practical setting, and we show this through simulated sampling experiments using natural rules and a large dataset of real passwords. The experimental results provide evidence for the difficulty of the negative rules setting: we search all combinations of rules to find the optimal policy and then attempt to discover this policy by making decisions both randomly and with a heuristic. In the negative rules setting, neither approach succeeded at finding the optimal policy after hundreds of iterations at various sample sizes, and average-case performance did not improve with sample size. In the positive rules setting, the average-case performance of our efficient algorithm improved with sample size and, with a moderate sample size, found %the optimal policy 10\% of the time.
policies that were either optimal or very close to optimal. 

%We are able to demonstrate strong impossibility results for the banning subsets setting. It is NP-hard to find $S$ in both the normalization model - even if we allow our algorithm to run in time polynomial in $n$ and $N$ and even if we set $k = 1$. For the ranking model we are even able to rule out the possibility of good approximation algorithms. In practice, $N$ is exponentially large so an efficient algorithm should run in time polynomial in $n$ and $\log N$.\\
}
\else
\input{intro_Feb8th}
\fi

\subsection{Related Work}

%\ap{We need to expand this subsection} 
%\jb{Added More Discussion...Saranga- are their any more papers you think we should cite here}
%\sk{added a couple references and expanded discussion, this might be too much detail but it can be pruned as needed or moved up in the Intro}

It has been repeatedly demonstrated that users tend to select easily guessable passwords \cite{rockYouPasswords,mostPopularPasswords2012,bonneau2012science} and NIST recommends that organizations ``should also ensure that other trivial passwords cannot be set," to thwart potential attackers \cite{NIST-Passwords}. Unfortunately, this task is more difficult than it might appear at first. Policies were initially developed without empirical data to support them, since such data was not available to policy designers~\cite{burr_electronic_2006}. When hackers leaked the RockYou dataset to the Internet, both researchers (and attackers) suddenly had access to password data, leading to many insights into true passwords~\cite{weir_testing_2010}. However, recent research analyzing leaked datasets from non-English speakers, notably Hebrew and Chinese-language websites, shows that trivial password choices can vary between contexts, making a simple blacklist approach ineffective~\cite{bonneau_contrasenas_2012}. This means that, depending on the context, a policy based on leaked password data might provide no security guarantee, and it has ethical issues as well.

To combat this issue, researchers have turned to a sampling approach. Bonneau~\citeyear{bonneau2012science} added a system for sampling to the Yahoo! password infrastructure. This system allows one to gain empirical data about the frequency distribution of passwords without revealing the passwords themselves. Such approaches provide a way of gathering empirical data about passwords while maintaining the anonymity of users. Our algorithms could be used in conjunction with such an infrastructure to optimize policies.

Komanduri et al.~\citeyear{komanduri2011passwords} studied the effectiveness of several basic password composition policies by using Amazon's Mechanical Turk to conduct a large scale user study. They found that people often respond to restrictions in predictable ways (e.g., if the password needs to contain a capital letter users might tend to capitalize the first letter of a password) and provide very general recommendations for password composition policies. However, no theoretical model has been proposed for studying the password composition problem.

Schechter et al.~\citeyear{schechter2010popularity} suggest using a popularity oracle to prevent individual passwords that have been used too frequently from being selected by new users. They also proposed using the count-min sketch data structure \cite{cormode2005improved} to build such a popularity oracle. Malone and Maher~\citeyear{malone2012investigating} suggest a similar system using a Metropolis-Hastings scheme to force an approximately uniform distribution on passwords. Usability results on the effectiveness of dictionary checks~\cite{komanduri2011passwords} suggest that such policies would be very frustrating since the policy is hidden from users behind an oracle. In contrast, we seek to construct optimal policies from combinations of rules that are visible to the user and can be described in natural language. 

This consideration of users is important to electronic commerce, even where security is concerned. Florencio and Herley~\citeyear{florencio2010security} studied the economic factors that drive institutions to adopt strict password composition policies and find that they often value the user experience over security. % Surprisingly, the primary influence is competition, while the value of assets being protected is a secondary influence. For example, a university might adopt strict composition policies because students must use their university account. By contrast, 
An e-mail provider like Yahoo! might adopt simple composition policies because a frustrated user could easily switch to Gmail, while universities are free to adopt strict policies because users cannot switch easily.

\section{A Model of Password Composition Policies} \label{sec:Models}

We use $\PasswordSpace$ to denote the space of all possible passwords. $N = \left| \PasswordSpace\right|$ is used to denote the total number of passwords. We denote the number of users by $n$.

A password composition policy may be specified in terms of rules. A rule is a subset of passwords $R \subseteq \PasswordSpace$ (e.g., the set of all passwords with more than seven characters). We use $R_1,...,R_m$ to denote a list of rules that may be active or inactive. We consider two schemes. 

\begin{itemize}
\item {\em Positive Rules: } A password $w$ is allowed if and only if it is allowed by some active positive rule. Formally, a password composition policy  $\mathcal{A}_S = \bigcup_{i \in S} R_i$
is specified by a set $S \subseteq [m] = \{1,...,m\}$ of active rules. In this setting rules should consist of sets of passwords which we expect to be strong (e.g., $R_i$ might be the set of all passwords longer than 10 characters, or the set of all passwords that use both upper and lowercase letters, or the set of all passwords that do not include a dictionary word). 

\item {\em Negative Rules:} A password $w$ is allowed if and only if it is not contained in any active negative rule. Formally, a solution  $\mathcal{A}_S = \left\{w \in \PasswordSpace ~\vline~w \notin \bigcup_{i \in S} R_i \right\}$ is given by a subset $S \subseteq [m]$ of active rules. A negative rule should consist of passwords that we expect to be weak (e.g., $R_i$ might be the set of all passwords without an uppercase letter, or the set of all passwords shorter than 6 characters, or the set of all passwords that include a dictionary word). 
%Unfortunately, our only theoretical results for this setting are negative: It is computationally difficult to find the optimal composition policy $\mathcal{A}_S$. By contrast, we are able to give efficient algorithms in the positive rules setting.
\end{itemize}

We also consider the special case of \emph{singleton rules}, where our rules are $\{w_1\},\ldots,\{w_N\}$. Equivalently, we are allowed to ban or allow any individual password. 

We use $\Pr[w~\vline ~\mathcal{A}]$ to denote the probability of a password $w$ given composition policy $\mathcal{A}$. For $w \notin \mathcal{A}$ we have $\Pr[w~\vline ~\mathcal{A}]=0$. Given a set $W \subseteq \mathcal{A}$ we will also use $\Pr[W~\vline~ \mathcal{A}] = \sum_{w \in W} \Pr[w~\vline~ \mathcal{A}]$. We use $p\left(k,\mathcal{A} \right) = \max_{W \subseteq \mathcal{A}: \left|W\right|=k} \Pr[W ~\vline ~\mathcal{A}]$ to denote the probability of the $k$ most popular passwords. Intuitively, $p\left(k,\mathcal{A} \right)$ represents the probability that an adversary can successfully guess a password using $k$ attempts. To avoid cumbersome notation we sometimes use $p_1 = p\left(1,\mathcal{A}\right)$ to denote the probability of the most popular password.  Similarly, we use $p_2$ (resp., $p_k$) to denote the probability of the second (resp., $k$'th) most popular password.

%\subsection{Sampling} \label{subsec:Sampling}
%A password sample $q\left( \mathcal{A} \right)$ returns the password $w$ with probability $\Pr[w~\vline ~\mathcal{A}]$. A sampling algorithm uses the sampling mechanism $q$ as a black box. The sampling complexity of an algorithm measures the number of samples needed.
%
%\subsection{Ranking Model} \label{subsec:PreferenceListModel}

We consider two user models that determine how users choose passwords under a given password composition policy. 

\begin{itemize}
\item \emph{The ranking model:} A ranking is simply a permutation of $\PasswordSpace$, which represents a user's password preferences. It can be represented using an ordered list $\ell_i = w_{1,i},...,w_{N,i}$; user $i$ prefers password $w_{j,i}$ to $w_{j+1,i}$ for all $j$. The ranking $\ell_i$ naturally tells us which password $i$ will pick under any composition policy $\mathcal{A}$. Specifically, $i$ will use password $w_{\mathcal{A},i} = w_{j,i}$ where  $j = \text{argmin}\{t : w_{t,i} \in \mathcal{A}\}$. Given a distribution $\mathcal{D}$ over rankings, we have 
\[\Pr\left[w~\vline \mathcal{A}\right] = \Pr_{\ell_i\sim \mathcal{D}}\left[w_{\mathcal{A},i} = w\right]  \ .\] 

\item\emph{The normalization model:} 
Let $\mathcal{D}$ be an initial distribution over $\PasswordSpace$, and let $\Pr\left[w \right] = \Pr_{x \sim \mathcal{D}}\left[w = x\right]$. If we select the composition policy $\mathcal{A}$ then the probabilities of all $w \in \mathcal{A}$ are simply re-normalized so that
\[\forall w \in \mathcal{P}, \mathcal{A} \subseteq \mathcal{P}, \Pr\left[w~\vline\mathcal{A} \right] = \frac{\Pr\left[w \right]}{\Pr\left[\mathcal{A}\right]}  \ .\]
\end{itemize}

Clearly it holds for both models that the probability of an allowed password monotonically increases as one bans more passwords. Formally, for all $w \in \mathcal{A}$ and $B \subseteq \mathcal{P}$ such that $w \notin B$ we have
\begin{equation} 
\label{claim:banningNecessaryClaim}
\Pr\left[ w ~ \vline \mathcal{A}\right]  \leq  \Pr\left[ w ~ \vline \mathcal{A} \backslash B \right]  \ . 
\end{equation}

Another important observation is that for our purposes the ranking model is more general than the normalization model. Indeed, we argue that a distribution $\mathcal{D}$ over passwords in the normalization model induces an equivalent distribution over rankings. To generate the most highly ranked password, draw a password $w_1$ from $\mathcal{D}$. Next, let $\mathcal{A}_1=\PasswordSpace\setminus\{w_1\}$, and draw the next most preferred password $w_2$, where $w_2=w$ with probability $\Pr[w\ |\mathcal{A}_1]$. In the following round we ban $w_2$ to obtain a policy $\mathcal{A}_2$, and so on, until all passwords have been banned. 

Given $k\in\mathbb{N}$, our goal is to find $S \subseteq [m]$ such that $p\left(k,\mathcal{A}_S\right) \leq p\left(k,\mathcal{A}_{S'}\right)$ for all $S' \subseteq [m]$. When $k=1$ this goal is equivalent to maximizing the minimum entropy. If $p\left(k,\mathcal{A}_S\right) \leq c\cdot p\left(k,\mathcal{A}_{S'}\right) + \epsilon$ for all $S' \subseteq [m]$ then we say that $S$ is a $(c,\epsilon)$-approximation. To simplify notation we sometimes use $c$-approximation instead of $(c,0)$-approximation.

\section{Ranking Model: Complexity Results}
In this section we consider the complexity of finding the optimal password composition policy in the more general ranking model when the organization is given complete information about users' preferences. Specifically, the organization is given the rankings $\ell_1,...,\ell_n$ of every user. 

Our first result is for the positive rules setting. Given positive rules $R_1,...,R_m$ we show that $p\left(k,\mathcal{A}_S\right)$ can be computed efficiently for constant values of $k$ (see Theorem \ref{thm:ConstantKAlg}). In fact, for the special case $k=1$ we present a very simple algorithm that suffices. Both algorithms can be easily extended to the less general normalization model. Our algorithms are based on three simple ideas: (1) Reduced Preference Lists --- each preference list $\ell_i$ can be efficiently reduced to a short (length $\leq m$) preference list $\hat{\ell}_i$. (2) Guess and Check --- start by guessing the `structure' of the optimal solution and find the resulting solution. (3) Iterative Elimination --- find the most popular password $w$ and eliminate all positive rules that contain $w$. Our sampling algorithms are based on the same core ideas.

Unfortunately, the picture is different in the negative rules even when $k$ is a constant. Given negative rules $R_1,...,R_m$ we show that it is hard to even $n^{1/3}$-approximate $p\left(1,\mathcal{A}_S\right)$. Also, for non-constant values of $k$ we show that it is hard to compute $p\left(k,\mathcal{A}_S\right)$ in the singleton rules setting, which immediately implies hardness in both the positive rules setting and in the negative rules setting. Given a stronger complexity assumption known as the Unique Games Conjecture \cite{khot2002power} it is also hard to $c_0$-approximate $p\left(k,\mathcal{A}_S\right)$ in the singleton rules setting for some constant $c_0$. However, our hardness results do not rule out the possibility of a $c$-approximation for a larger constant $c$.

\subsection{Positive Rules: Efficient Algorithm for Constant $k$} \label{subsec:algForConstantk}
We first show that $p\left(k,\mathcal{A}_S\right)$ can be computed efficiently for constant values of $k$ in the positive rules setting. In this section the organization is given positive rules $R_1,...,R_m$ as well as preference lists $\ell_1,...,\ell_n$. We assume that the organization can efficiently query the preference lists (e.g., given $S \subseteq [m]$ the organization can efficiently find $\ell_i\left(\mathcal{A}_S\right)$ --- user $i$'s preferred password given policy $\mathcal{A}_S$). 

We elaborate on the key algorithmic ideas listed above. First, we can efficiently reduce each preference list $\ell_i$ to a list $\hat{\ell}_i$ of at most $m$ passwords (Claim \ref{claim:ReducedPreferenceListClaim}). While the reduced list $\hat{\ell}_i$ is much shorter than $\ell_i$ it is still sufficient to determine user $i$'s preferred password given policy $\mathcal{A}_S$ for any $S \subseteq [m]$. We use $\hat{\PasswordSpace}$ to denote the reduced space of potential passwords. 

\begin{algorithm}
\caption{Reduce}
\begin{algorithmic}
\State {\bf Input:}
\State {Preference List: } $\ell$
\State {Positive Rules: } $R_1,...,R_m$
\State {\bf Initialize:}  $i\gets 0$, $S_0 \gets [m]$, $\hat \ell \gets$ empty ranking.
\While{$S_i \neq \emptyset$}
\State Let $w$ be $\ell\left(\mathcal{A}_{S_i}\right)$.
\State $\hat{\ell} \gets \langle \hat\ell, w\rangle$\Comment{`Append' the current most preferred password to $\hat{\ell}$}
\State $S_{i+1} \gets S_i \setminus  \left\{j ~\vline~ w\in R_j\right\}$\Comment{Deactivate all rules that contain $w$}
\State $i \gets i+1$
\EndWhile
\Return $\hat{\ell}$
\end{algorithmic}
\label{alg:ReducePreferenceLists}
\end{algorithm}

\begin{claim} \label{claim:ReducedPreferenceListClaim} Algorithm \ref{alg:ReducePreferenceLists} makes at most $m$ queries to $\ell$ and $m^2$ membership queries and outputs a reduced preference list $\hat{\ell}$ over at most $m$ passwords such that for every $S \subseteq [m]$ it holds that $\hat{\ell}\left(\mathcal{A}_S\right) = \ell\left(\mathcal{A}_S\right)$. 
\end{claim}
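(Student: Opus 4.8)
The plan is to establish the three assertions — the query bounds, the length bound, and the functional correctness of $\hat\ell$ — by analyzing a single run of Algorithm~\ref{alg:ReducePreferenceLists}. The bounds follow immediately once I observe that the active set strictly shrinks at every iteration. At iteration $i$ the chosen password $w=\ell(\mathcal{A}_{S_i})$ is allowed by $\mathcal{A}_{S_i}=\bigcup_{j\in S_i}R_j$, so $w\in R_j$ for at least one $j\in S_i$; hence the update $S_{i+1}=S_i\setminus\{j : w\in R_j\}$ deletes at least one index of $S_i$, giving $|S_{i+1}|<|S_i|$. Since $|S_0|=m$ and the loop halts when $S_i=\emptyset$, there are at most $m$ iterations. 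Each iteration issues exactly one query to $\ell$ and at most $m$ membership tests ($w\in R_j$ for $j\in[m]$) and appends a single password, yielding at most $m$ queries to $\ell$, at most $m^2$ membership queries, and a list $\hat\ell$ of length at most $m$.

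For correctness, let $v_0,v_1,\ldots,v_{t-1}$ be the passwords appended to $\hat\ell$ in order, so $v_i=\ell(\mathcal{A}_{S_i})$, and write $S_0\supsetneq S_1\supsetneq\cdots\supsetneq S_{t-1}\supsetneq S_t=\emptyset$ with $t\le m$. First I would record the monotonicity facts: since $S_{i+1}\subseteq S_i$ we have $\mathcal{A}_{S_{i+1}}\subseteq\mathcal{A}_{S_i}$, and by construction $v_i\notin\mathcal{A}_{S_{i+1}}$ because every rule containing $v_i$ has been deactivated. Therefore $v_{i+1}\in\mathcal{A}_{S_{i+1}}\subseteq\mathcal{A}_{S_i}$ while $v_{i+1}\neq v_i$, and since $v_i$ is the unique most-preferred password of $\mathcal{A}_{S_i}$, the element $v_{i+1}$ is strictly less preferred than $v_i$. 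Consequently the order $v_0,v_1,\ldots,v_{t-1}$ in which $\hat\ell$ lists its (distinct) passwords agrees with their relative order under $\ell$.

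The heart of the argument is to prove $\hat\ell(\mathcal{A}_S)=\ell(\mathcal{A}_S)$ for every $S\subseteq[m]$; I assume $\mathcal{A}_S\neq\emptyset$, the other case being vacuous. Because the active sets strictly decrease with $S\subseteq S_0$ and $S\not\subseteq S_t=\emptyset$, the indices with $S\subseteq S_i$ form a prefix $\{0,\ldots,i^*\}$, so there is a largest index $i^*$ with $S\subseteq S_{i^*}$ and $S\not\subseteq S_{i^*+1}$. I would then show $v_{i^*}=\ell(\mathcal{A}_S)$. Since some $j_0\in S$ lies in $S_{i^*}$ but not in $S_{i^*+1}$, the update at step $i^*$ must have deactivated $j_0$, meaning $v_{i^*}\in R_{j_0}\subseteq\mathcal{A}_S$, so $v_{i^*}$ is an allowed password of $\mathcal{A}_S$. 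Moreover $S\subseteq S_{i^*}$ gives $\mathcal{A}_S\subseteq\mathcal{A}_{S_{i^*}}$, and $v_{i^*}=\ell(\mathcal{A}_{S_{i^*}})$ is most preferred in the larger set, hence most preferred in $\mathcal{A}_S$; thus $v_{i^*}=\ell(\mathcal{A}_S)$. Finally, every earlier entry $v_i$ with $i<i^*$ is strictly more preferred than $v_{i^*}=\ell(\mathcal{A}_S)$ by the ordering of the previous paragraph, and the most-preferred allowed password dominates all members of $\mathcal{A}_S$, so $v_i\notin\mathcal{A}_S$. Hence $v_{i^*}$ is the first entry of $\hat\ell$ lying in $\mathcal{A}_S$, which is exactly $\hat\ell(\mathcal{A}_S)$, giving the claimed equality.

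I expect the main obstacle to be isolating the correct index $i^*$ and proving $v_{i^*}\in\mathcal{A}_S$: the clean way to see this is that $\{i : S\subseteq S_i\}$ is a prefix and that the single transition breaking $S\subseteq S_{i^*+1}$ is caused by deactivating a rule of $S$ that contains $v_{i^*}$, which is precisely what forces $v_{i^*}$ into $\mathcal{A}_S$. The remaining pieces — the strict decrease of $|S_i|$ and the decreasing preference order of the $v_i$ — are routine, and the degenerate case $\mathcal{A}_S=\emptyset$ is handled by observing that neither $\ell(\mathcal{A}_S)$ nor $\hat\ell(\mathcal{A}_S)$ is defined there.
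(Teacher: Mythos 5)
Your proof is correct and follows essentially the same argument as the paper's: bound the iterations by the strict shrinkage of the active set, then locate the last index $i^*$ with $S \subseteq S_{i^*}$ and show that the password selected at that step is both $\ell\left(\mathcal{A}_S\right)$ and the first entry of $\hat{\ell}$ lying in $\mathcal{A}_S$. The only difference is presentational: the paper handles the transition step by contradiction (if the chosen word avoided $\bigcup_{j\in S}R_j$, then $S$ would survive into $S_{i^*+1}$) and leaves the ordering of $\hat{\ell}$ implicit, whereas you argue directly via the deactivated rule $j_0 \in S$ and explicitly verify that the entries of $\hat{\ell}$ appear in decreasing preference order, which closes a step the paper glosses over.
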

\begin{proof}
Clearly, the algorithm's main loop iterates at most $m$ times because for each $i$ we eliminate at least one rule (e.g., $\left|S_{i+1}\right| < \left|S_i\right|$), so the bound on queries and the length of $\hat\ell$ are immediate. (Because we assume that we can query $\ell$ efficiently Algorithm \ref{alg:ReducePreferenceLists} is also efficient.) By construction we have $\hat{\ell}(S_i) = \ell(S_i)$ for each $S_i$. Fix any $S \subseteq [m]$. Let $S_i$ be such that $S \subseteq S_i$ yet $S \not\subseteq S_{i+1}$ and let $w_i$ be the most preferred word in $\ell$ out of all words in $\bigcup_{j\in S_i}R_j$. If it is the case that $w_i \in \bigcup_{j\in S} R_j$, then $w_i$ is the most preferred word in $S$ too and we're done. Otherwise, $w_i \in \bigcup_{j\in S_i \setminus S} R_j$ which means that removing the set $\{j \in S_i ~:~ w_i \in R_j\}$ creates a set $S_{i+1}$ s.t. $S\subseteq S_{i+1}$, contradiction.
\end{proof}

Second, the ``guess and check'' idea means that our algorithm starts by guessing what the optimal solution looks like (e.g., what the $k$ most popular passwords will be in the optimal solution and what the probability of the $k$'th most popular password is). There are at most $\left(mn\right)^{O(k)}$ potential solutions to brute-force try. As we show, for each solution, it is easy to figure out which sets must be eliminated.

\begin{algorithm}
\caption{GuessAndCheck}
\begin{algorithmic}
\State {\bf Input:} 
\State {Preference Lists: } $\ell_1,...,\ell_n$
\State {Positive Rules: } $R_1,...,R_m \subseteq \PasswordSpace$
\State Integer $k$
\State {\bf Initialize: } $Candidates \gets \emptyset$
\Comment{Candidate Solutions}
\For{$i=1 \to n$}
\State $\hat{\ell}_i \gets Reduce\left(\ell_i,R_1,...,R_m\right)$ 
\EndFor
\State $\hat{\PasswordSpace} \gets \bigcup_{i=1}^n \hat{\ell}_i$.
\Comment{Reduced Password Space}
\ForAll{$(G,p)$ with $G \subseteq \hat{\PasswordSpace}$ s.t. $|G| = k$ and $p \in \{1/n,2/n,...,1\}$}

\State $S_{G,p} \gets [m]$
\While{$S_{G,p} \neq \emptyset$ and $\exists w \in \left(\hat{\PasswordSpace}\setminus G\right) \cap \mathcal{A}_{S_{G,p}}$ s.t  $\Pr\left[w~\vline~\mathcal{A}_{S_{G,p}}\right] > p$}
 \State $S_{G,p} \gets  S_{G,p} \setminus \{j~|~w \in R_j \} $\Comment{Ban $w$ because it is inconsistent with guess}

\EndWhile

\If{$\Pr\left[w~\vline~\mathcal{A}_{S_{G,p}}\right] \leq p$ for all $w \in \left(\mathcal{A}_{S_{G,p}}\setminus G\right)$} 
\State $Candidates \gets  Candidates \cup \{{S_{G,p}}\}$  \EndIf

\EndFor
\Return $\arg\min_{(G,p) \in Candidates} p\left(k,\mathcal{A}_{S_{G,p}}\right)$
\end{algorithmic}
\label{alg:ConstantKAlgorithm}
\end{algorithm}

\begin{theorem} \label{thm:ConstantKAlg}
Algorithm~\ref{alg:ConstantKAlgorithm} runs in time polynomial in $n^k$, $m^k$ and outputs a set of positive rules $S \subseteq [m]$ of positive rules such that \[p\left(k,\mathcal{A}_S \right) \leq p\left(k,\mathcal{A}_{S'} \right)\] for every other set $S' \subseteq [m]$.
\end{theorem}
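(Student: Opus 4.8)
The plan is to establish the two halves of the statement separately: the running-time bound, and optimality of the output. The running time is the easier half. By Claim~\ref{claim:ReducedPreferenceListClaim} reducing all $n$ lists costs $\poly(n,m)$ and yields $|\hat{\PasswordSpace}|\le nm$. The main loop then ranges over all pairs $(G,p)$ with $|G|=k$ and $p$ on the grid $\{1/n,\dots,1\}$, of which there are at most $\binom{nm}{k}\cdot n=O(n^{k+1}m^k)$. For each pair the while-loop runs at most $m$ times (every iteration deletes at least one rule), and each iteration only recomputes, for the $\le nm$ candidate passwords, the value $\Pr[w\mid\mathcal{A}_{S_{G,p}}]$, which from the reduced lists is a $\poly(n,m)$ computation. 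The product of these counts is polynomial in $n^k$ and $m^k$, as claimed.

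For correctness, fix an optimal $S^*$, let $G^*$ be the $k$ most popular passwords under $\mathcal{A}_{S^*}$, and let $p^*=\min_{w\in G^*}\Pr[w\mid\mathcal{A}_{S^*}]$ be the probability of the $k$-th most popular password. Since every probability in the ranking model is a multiple of $1/n$, the pair $(G^*,p^*)$ is one of those the algorithm enumerates (the degenerate case $p^*=0$, where fewer than $k$ passwords ever get positive mass and $p(k,\cdot)$ equals the total allowed mass, is handled trivially and separately). It then suffices to show that the candidate $S_{G^*,p^*}$ is added to $Candidates$ and satisfies $p(k,\mathcal{A}_{S_{G^*,p^*}})\le p(k,\mathcal{A}_{S^*})$; since the algorithm returns the best candidate, optimality follows.

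The first step is the invariant $S^*\subseteq S_{G^*,p^*}$, maintained throughout the while-loop and proved by induction. It holds initially since $S_{G^*,p^*}=[m]$. For the inductive step, assume $S^*\subseteq S_{G^*,p^*}$, so (positive rules) $\mathcal{A}_{S^*}\subseteq\mathcal{A}_{S_{G^*,p^*}}$, and suppose the loop bans some $w\notin G^*$ with $\Pr[w\mid\mathcal{A}_{S_{G^*,p^*}}]>p^*$. Were $w\in\mathcal{A}_{S^*}$, then since $w\notin G^*$ we would have $\Pr[w\mid\mathcal{A}_{S^*}]\le p^*$, and monotonicity~\eqref{claim:banningNecessaryClaim} applied with $B=\mathcal{A}_{S_{G^*,p^*}}\setminus\mathcal{A}_{S^*}$ would give $\Pr[w\mid\mathcal{A}_{S_{G^*,p^*}}]\le\Pr[w\mid\mathcal{A}_{S^*}]\le p^*$, a contradiction. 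Hence $w\notin\mathcal{A}_{S^*}$, so no rule of $S^*$ contains $w$, and deleting $\{j:w\in R_j\}$ removes no rule of $S^*$; the invariant persists, and $S_{G^*,p^*}$ never empties. At termination every allowed $w\notin G^*$ has probability $\le p^*$ (passwords outside $\hat{\PasswordSpace}$ have probability $0$), so the final test passes and $S_{G^*,p^*}$ becomes a candidate.

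The crux, and the step I expect to be the main obstacle, is bounding $p(k,\mathcal{A}_{S_{G^*,p^*}})$, because $p(k,\cdot)$ is \emph{not} monotone under banning: the invariant gives $\mathcal{A}_{S^*}\subseteq\mathcal{A}:=\mathcal{A}_{S_{G^*,p^*}}$, yet allowing more passwords can either raise or lower $p(k,\cdot)$, so one cannot simply invoke monotonicity. Instead I plan a swapping argument. Let $T$ be the top-$k$ set under $\mathcal{A}$, and split $\Pr[T\mid\mathcal{A}]$ according to $T\cap G^*$ and $T\setminus G^*$. Each $w\in T\cap G^*$ lies in $\mathcal{A}_{S^*}$, so monotonicity gives $\Pr[w\mid\mathcal{A}]\le\Pr[w\mid\mathcal{A}_{S^*}]$; each $w\in T\setminus G^*$ satisfies $\Pr[w\mid\mathcal{A}]\le p^*$. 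Using $|T\setminus G^*|=|G^*\setminus T|$ together with the fact that every $w\in G^*\setminus T$ has $\Pr[w\mid\mathcal{A}_{S^*}]\ge p^*$, the $p^*$-terms are charged against the mass of $G^*\setminus T$, yielding
\[
p(k,\mathcal{A})\;\le\;\sum_{w\in T\cap G^*}\Pr[w\mid\mathcal{A}_{S^*}]+|T\setminus G^*|\,p^*\;\le\;\sum_{w\in G^*}\Pr[w\mid\mathcal{A}_{S^*}]\;=\;p(k,\mathcal{A}_{S^*}).
\]
This gives $p(k,\mathcal{A}_{S_{G^*,p^*}})\le p(k,\mathcal{A}_{S^*})$ and completes the proof.
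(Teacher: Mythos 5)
Your proposal is correct, and its skeleton is the same as the paper's: enumerate guesses, run the elimination loop on the correct guess $(G^*,p^*)$, maintain the invariant $\mathcal{A}_{S^*}\subseteq\mathcal{A}_{S_{G^*,p^*}}$ by exactly the monotonicity argument via Equation~\eqref{claim:banningNecessaryClaim}, and then bound the value of the resulting candidate. The one place where you genuinely diverge is the step you correctly flag as the crux. The paper's own proof ends by deducing $\Pr\left[G\mid\mathcal{A}_{S_{G,p}}\right]\leq\Pr\left[G\mid\mathcal{A}_{S^*}\right]=p\left(k,\mathcal{A}_{S^*}\right)$ and declaring the output optimal; as written this is incomplete, because $p\left(k,\mathcal{A}_{S_{G,p}}\right)\geq\Pr\left[G\mid\mathcal{A}_{S_{G,p}}\right]$ --- the top-$k$ set $T$ under the final policy need not equal $G$ (a password of $G$ may have lost so much mass that some password outside $G$, with probability up to $p$, displaces it). Your exchange argument, pairing each $w\in T\setminus G^*$ (probability at most $p^*$) against a distinct $w\in G^*\setminus T$ (probability at least $p^*$ under $\mathcal{A}_{S^*}$), is precisely the bridge needed to turn the paper's final inequality into $p\left(k,\mathcal{A}_{S_{G^*,p^*}}\right)\leq p\left(k,\mathcal{A}_{S^*}\right)$. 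You also tidy up two details the paper leaves implicit: the degenerate guess $p^*=0$ (absent from the grid $\{1/n,\dots,1\}$) and the fact that passwords outside $\hat{\PasswordSpace}$ carry zero probability, so the candidate test passes. In short, same approach, but your writeup closes a real gap in the paper's concluding step.
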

\begin{proof} 
It is evident that the running time of the algorithm is $\poly(n^k,m^k)$ since we only have $O((nm)^k)$ potential solutions to try.

Let $\mathcal{A}_{S^*}$ denote an optimal solution and let $G^*$ denote the $k$ most popular passwords in this solution. Suppose we start with the correct guess ($G=G^*$ and $p$ is the probability of the $k$'th most popular password), then we claim that our algorithm must produce the optimal solution. In particular, we maintain the invariant that $\mathcal{A}_{S^*} \subseteq \mathcal{A}_{S_{G,p}}$ until we converge to the optimal solution. \ap{The last sentence seems to indicate that you want to to prove the invariant, but for the rest of the proof you assume it.} \jb{Added more explanatory text --- is the proof clear now?} Clearly, this is true initially --- before we have eliminated any passwords.  

Suppose that the invariant holds and that our algorithm bans a password $w \in \PasswordSpace \setminus G$ \ap{Should this be $\PasswordSpace-G$?}\jb{yes} by deactivating all rules in $S_{G,p}$ that contain $w$. Then by the definition of our algorithm we must have  $\Pr\left[w~\vline~\mathcal{A}_{S_{G,p}}\right] > p$. If $w\in\mathcal{A}_{S^*}$ then by Equation \eqref{claim:banningNecessaryClaim} we have \[\Pr\left[w~\vline~\mathcal{A}_{S^*}\right] \geq \Pr\left[w~\vline~\mathcal{A}_{S_{G,p}}\right] > p \ ,\]
which contradicts the choice of $G$. Therefore $w\notin\mathcal{A}_{S^*}$, so all rules that contain it are deactivated in $\mathcal{A}_{S^*}$ and the invariant still holds. By definition Algorithm \ref{alg:ConstantKAlgorithm} terminates when every password $w \in A_{S_{G,p}} \setminus G$ \ap{Should this be $w \in A_{S_{G,p}} -G$?} \jb{yes} has probability at most $p$. Because our invariant still holds we can apply Equation \eqref{claim:banningNecessaryClaim} again to get
\[ \Pr\left[G~\vline~\mathcal{A}_{S_{G,p}} \right]  \leq \Pr\left[G~\vline~\mathcal{A}_{S^*} \right] = p\left(k,\mathcal{A}_{S^*}\right)  \ . \]
Hence, $A_{S_{G,p}}$ is an optimal solution.
\end{proof}

%\subsubsection{Special Case $k=1$}
For the special case $k=1$ the simple algorithm IterativeElimination (Algorithm~\ref{alg:IterativeElimination}) suffices. The basic idea is very simple: iteratively eliminate the most popular password $w$ by deactivating all positive rules that contain $w$. We repeat this process until no passwords remain. We claim that one of the solutions along the way was the optimal solution.

\begin{algorithm}
\caption{IterativeElimination}
\begin{algorithmic}
\State {\bf Input:}
\State {Preference Lists: } $\ell_1,...,\ell_n$
\State {Positive Rules: } $R_1,...,R_m \subseteq \PasswordSpace$
\State {\bf Initialize: } $S_0 \gets [m]$, $i \gets 0$
\While{$S_i \neq \emptyset$} 

\State $w\left(S_i\right) \gets \arg\max \left\{\Pr\left[w ~|~\mathcal{A}_{S_i}\right]  ~\vline~  w \in \mathcal{A}_{S_i}\right\}$
\Comment{$w\left(S_i\right)$ is most popular allowed pwd}
\State $S_{i+1} \gets S_i \setminus \left\{j~\vline~w\left(S_i\right) \in R_j \right\}$ 
\Comment{Deactivate all rules that contain $w\left(S_i\right)$}
\State $i \gets i+1$
\EndWhile
\Return $S_{i^*}$ where $i^* \gets \arg\min_i p\left(1,\mathcal{A}_{S_i}\right)$
\end{algorithmic}
\label{alg:IterativeElimination}
\end{algorithm}

\begin{theorem} \label{thm:AlgorithmAddingSubsets}
Algorithm~\ref{alg:IterativeElimination} outputs a set of positive rules $S \subseteq [m]$ such that 
\[ \forall S'\subseteq [m] ~, ~ ~ p\left(1,\mathcal{A}_S \right) \leq p\left(1,\mathcal{A}_{S'} \right) \  . \]
\end{theorem}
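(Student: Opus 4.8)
The plan is to mirror the guess-and-check argument of Theorem~\ref{thm:ConstantKAlg}, but to exploit the fact that for $k=1$ there is essentially nothing to guess: the single threshold one would have guessed is just the optimal value itself. Let $\mathcal{A}_{S^*}$ be an optimal policy and write $p^* = p\left(1,\mathcal{A}_{S^*}\right)$. I would show that some intermediate set $S_i$ produced by the main loop satisfies $p\left(1,\mathcal{A}_{S_i}\right) \leq p^*$. Since the algorithm returns the visited set minimizing $p\left(1,\cdot\right)$, and $p^*$ is by definition the minimum achievable value, this immediately yields $p\left(1,\mathcal{A}_{S}\right) = p^*$, i.e.\ optimality.

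The core of the proof is an invariant, established by induction on $i$: namely $\mathcal{A}_{S^*} \subseteq \mathcal{A}_{S_i}$, maintained for as long as the algorithm keeps eliminating passwords whose probability exceeds $p^*$. The base case is immediate since $S_0 = [m]$ gives $\mathcal{A}_{S_0} = \bigcup_j R_j \supseteq \mathcal{A}_{S^*}$. For the inductive step, suppose the invariant holds at step $i$ and consider $w\left(S_i\right)$, the most popular allowed password under $\mathcal{A}_{S_i}$. If $\Pr\left[w\left(S_i\right)~\vline~\mathcal{A}_{S_i}\right] \leq p^*$, then $p\left(1,\mathcal{A}_{S_i}\right) \leq p^*$ and we are done. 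Otherwise $\Pr\left[w\left(S_i\right)~\vline~\mathcal{A}_{S_i}\right] > p^*$, and here is the crux: I claim $w\left(S_i\right) \notin \mathcal{A}_{S^*}$. Indeed, if $w\left(S_i\right)$ were allowed under $S^*$, then since $\mathcal{A}_{S^*} \subseteq \mathcal{A}_{S_i}$ and $w\left(S_i\right) \in \mathcal{A}_{S^*}$, monotonicity (Equation~\eqref{claim:banningNecessaryClaim}) gives $\Pr\left[w\left(S_i\right)~\vline~\mathcal{A}_{S^*}\right] \geq \Pr\left[w\left(S_i\right)~\vline~\mathcal{A}_{S_i}\right] > p^*$, contradicting the optimality of $p^*$. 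Hence $w\left(S_i\right) \notin \mathcal{A}_{S^*}$, so every rule containing $w\left(S_i\right)$ is inactive in $S^*$; this forces $S^* \subseteq S_{i+1}$ and therefore $\mathcal{A}_{S^*} \subseteq \mathcal{A}_{S_{i+1}}$, preserving the invariant.

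Finally I would close the argument by ruling out that the loop empties $S$ before ever encountering a step with $p\left(1,\mathcal{A}_{S_i}\right) \leq p^*$. We may assume the optimal policy is nonempty (so $p^* > 0$ and $\mathcal{A}_{S^*} \neq \emptyset$; an empty policy is degenerate). Consider the last index $i$ with $\mathcal{A}_{S_i} \neq \emptyset$. Were every visited nonempty policy to satisfy $p\left(1,\cdot\right) > p^*$, the inductive step would apply at each stage, preserving the invariant all the way to $\mathcal{A}_{S^*} \subseteq \mathcal{A}_{S_{i+1}} = \emptyset$, a contradiction. Thus some visited $S_i$ achieves $p\left(1,\mathcal{A}_{S_i}\right) \leq p^*$, and the returned set is at least as good as optimal.

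I expect the main obstacle to be stating the termination step cleanly: making precise that the invariant is only guaranteed while we remain in the ``eliminate a too-popular password'' regime, and that this regime must end at an index whose policy is still nonempty. The monotonicity-based inductive step is essentially forced, so the only delicate bookkeeping lies in how the two cases ($p\left(1,\mathcal{A}_{S_i}\right) \leq p^*$ versus $> p^*$) interleave with the loop's gradual emptying of $S$.
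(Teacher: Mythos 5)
Your proof is correct in substance, but it is organized differently from the paper's. The paper's proof of Theorem~\ref{thm:AlgorithmAddingSubsets} is non-inductive: writing $T$ for an optimal policy, it considers the \emph{last} visited set $S$ with $T \subseteq S$, observes that the word $w(S)$ banned at that step must lie in $\bigcup_{j \in T} R_j = \mathcal{A}_T$ (since banning it is what destroys the containment), and then sandwiches three inequalities --- $\Pr[w(T) \mid \mathcal{A}_T] \geq \Pr[w(S) \mid \mathcal{A}_T]$ by definition of $w(T)$, $\Pr[w(S) \mid \mathcal{A}_T] \geq \Pr[w(S) \mid \mathcal{A}_S]$ by monotonicity (Equation~\eqref{claim:banningNecessaryClaim}), and $\Pr[w(T) \mid \mathcal{A}_T] \leq \Pr[w(S) \mid \mathcal{A}_S]$ by optimality --- to conclude $p\left(1,\mathcal{A}_S\right) = p\left(1,\mathcal{A}_T\right)$, so that particular iterate is already optimal. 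You instead run the invariant-maintenance argument of Theorem~\ref{thm:ConstantKAlg} specialized to $k=1$ (with the guessed threshold replaced by $p^*$), and argue by contradiction at termination that the elimination regime must end, before $S$ empties, at an iterate with $p\left(1,\cdot\right) \leq p^*$. Both proofs hinge on the same monotonicity fact; yours buys uniformity with the general-$k$ argument and an explicit treatment of termination (including the degenerate empty-policy case, which the paper silently ignores), while the paper's buys brevity and pinpoints exactly which iterate attains the optimum.

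One tightening you should make: state the invariant at the level of rule sets, $S^* \subseteq S_i$, rather than password sets, $\mathcal{A}_{S^*} \subseteq \mathcal{A}_{S_i}$. The password-level containment alone is not a self-propagating induction hypothesis: from $\mathcal{A}_{S^*} \subseteq \mathcal{A}_{S_i}$ and $w\left(S_i\right) \notin \mathcal{A}_{S^*}$ one cannot in general conclude $S^* \subseteq S_{i+1}$. For instance, with $R_1 = \{a,c\}$, $R_2 = \{a,b,c\}$, $S^* = \{1\}$ and $S_i = \{2\}$, we have $\mathcal{A}_{S^*} \subseteq \mathcal{A}_{S_i}$ and $b \notin \mathcal{A}_{S^*}$, yet banning $w\left(S_i\right) = b$ deactivates $R_2$ and empties the policy entirely. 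Such a configuration cannot actually arise in a run of Algorithm~\ref{alg:IterativeElimination} precisely because the rule-level containment is preserved from $S_0 = [m]$ onward --- and your inductive step in fact proves the rule-level statement (no rule of $S^*$ contains $w\left(S_i\right)$, hence none is deactivated). So the fix is pure bookkeeping: take $S^* \subseteq S_i$ as the hypothesis and derive $\mathcal{A}_{S^*} \subseteq \mathcal{A}_{S_i}$ from it wherever monotonicity is invoked.
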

\begin{proof}
Let $T$ denote the optimal policy. Clearly if $T = [m]$ then our algorithm returns $S^* = T$ because that is the first set we try. Otherwise, $T\subsetneq [m]$. Let $S$ be the last set our algorithm considers that has the property that $T\subseteq S$. Again, if $T=S$, our algorithm returns $S$. Let $w(T)$ be the most popular word in $\mathcal{A}_T$, and because of optimality $\Pr[w(T) ~|~ \mathcal{A}_T] \leq \Pr[w(S) ~|~ \mathcal{A}_S]$.

Now, because we modify $S$ to not contain $T$ in the next iteration, then the most popular word in $S$, $w(S)$ has to belong to some rule $R_j$ where $j \in T$. Therefore $w(S) \in \bigcup_{j\in T} R_j$, and by the definition, the most popular word in $\mathcal{A}_T$ satisfies $\Pr[w(T) ~|~ \mathcal{A}_T] \geq \Pr[w(S) ~|~ \mathcal{A}_T]$. 

But observe, because $w(S) \in \bigcup_{j\in T} R_j$, we must have that $w(S)$ is at least as popular in $T$. 
Indeed, if $\ell$ is a preference list where we disallowed $\PasswordSpace \setminus \bigcup_{j\in S} R_j$ and the most preferred word is $w(S)$, then as long as we disallow more words but keep allowing $w(S)$ the word $w(S)$ remains at the top of the list. Therefore, $\Pr[w(S)~|~ \mathcal{A}_T] \geq \Pr[w(S)~|~ \mathcal{A}_S]$. Combining together all inequalities we get $\Pr[w(T) ~|~ \mathcal{A}_T]  = \Pr[w(S) ~|~ \mathcal{A}_S]$, which means our algorithm returns $S^* = S$.
\end{proof}

\subsection{Singleton Rules: Hardness for Large $k$} \label{subsec:HardessKisParameter}
Now we turn our attention to the problem of optimizing $p\left(k,\mathcal{A}_S \right)$ for large values of $k$. Theorem \ref{thm:hardnessKisParameter} says that unless $P=NP$ no polynomial time algorithm can compute $p\left(k,\mathcal{A}_S \right)$ even with singleton rules. If we are willing to make %a slightly stronger complexity conjecture known as % No need to restate this, it is stated earlier
the Unique Games Conjecture (UGC) \cite{khot2002power} then it is hard to even $c_0$-approximate $p\left(k,\mathcal{A}_S\right)$ for some constant $c_0$. These results immediately imply hardness in both the positive and negative rules setting because these settings are a generalization of the singleton rules setting. 

\begin{theorem}\label{thm:hardnessKisParameter}
Unless $P=NP$ there is no $\poly(k,n,N)$-algorithm that gets as input an arbitrary set of $n$ preference-lists $\ell_1,...,\ell_n$ over $\PasswordSpace$ and an integer $k$, and outputs the optimal $p(k,\mathcal{A})$ in the singleton rules setting.
\end{theorem}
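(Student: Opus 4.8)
The plan is to show that the decision version is NP-hard, which is enough since an exact $\poly(k,n,N)$ algorithm for the optimum would decide it. The decision problem is: given preference lists $\ell_1,\dots,\ell_n$ over $\PasswordSpace$, an integer $k$, and a rational threshold $\theta$, is there a policy $\mathcal{A}\subseteq\PasswordSpace$ (under singleton rules a policy is simply an arbitrary set of allowed passwords) with $p(k,\mathcal{A})\le\theta$? I would reduce from a bounded-occurrence variant of \textsc{3-SAT} (each variable appears in a bounded number of clauses, which keeps the reduction's loads bounded and is still NP-complete), mapping a satisfiable formula to an instance admitting a well-spread policy and an unsatisfiable one to an instance in which every policy leaves some password heavy.

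The construction works through gadgets. For each variable $x_i$ I introduce two ``literal passwords'' $T_i$ and $F_i$, intending that the policy allow exactly one of them, thereby encoding a truth assignment. Dedicated ``assignment-enforcing'' user groups, together with a large pool of low-weight ``junk'' passwords used as spread-out landing spots, are arranged so that allowing exactly one literal per variable is the cheapest consistent behaviour: allowing both literals forces extra dedicated users to pile up and strictly increases the top-$k$ mass, while banning both only makes clauses harder to satisfy (a safe direction). For each clause $C_j=(\ell_{j1}\vee\ell_{j2}\vee\ell_{j3})$ I add a group of ``clause users'' ranking the three literal passwords of its literals at the top, followed by a dedicated clause password $Q_j$ and then junk. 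If some literal of $C_j$ is true its password is allowed and the clause users are absorbed cheaply (each literal password stays light because occurrences are bounded); if all three literals are false the clause users concentrate on $Q_j$, creating one heavy password. Choosing $k$ and $\theta$ so that $\theta$ exactly accommodates the weight of a consistent assignment's allowed passwords but not a single additional heavy $Q_j$, a policy meets the threshold precisely when every clause is satisfied.

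For correctness, a satisfying assignment yields the ``allow exactly the true literals'' policy, and the intended redistribution gives $p(k,\mathcal{A})\le\theta$; conversely, any policy with $p(k,\mathcal{A})\le\theta$ must, because of the assignment-enforcing groups, correspond to a (partial) assignment, and must satisfy every clause, for otherwise some $Q_j$ is heavy and the top-$k$ mass exceeds $\theta$. The main obstacle is this redistribution analysis. Because banning one password simultaneously redirects every user currently topping out there, the users sharing a password are coupled, so I must rule out ``clever'' unintended policies that ban literal or junk passwords in unplanned patterns yet beat $\theta$ without corresponding to a satisfying assignment; monotonicity, i.e. Equation~\eqref{claim:banningNecessaryClaim}, is the main tool for bounding what such deviations can and cannot achieve. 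Calibrating the group sizes, $k$, and $\theta$ so that one unsatisfied clause provably pushes the objective above $\theta$ while a satisfying assignment provably stays at or below it is the delicate, calculation-heavy part; checking that the instance has $\poly$ many passwords and users and is built in polynomial time is routine.
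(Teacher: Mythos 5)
Your proposal diverges from the paper's proof (which reduces from Vertex Cover, not 3-SAT), and as written it has a genuine gap in the unsatisfiability direction. The crux is the claim that when all three literals of a clause $C_j$ are banned, ``the clause users concentrate on $Q_j$, creating one heavy password.'' But in the singleton rules setting nothing stops the adversarial policy from banning $Q_j$ as well, at which point the clause users fall through to the junk pool --- which you yourself designed as ``spread-out landing spots'' --- and no heavy password remains. Monotonicity (Equation~\eqref{claim:banningNecessaryClaim}) cannot rule this out: it only says that each \emph{individual} surviving password's probability weakly increases when more passwords are banned, whereas the top-$k$ mass $p(k,\mathcal{A})$ can strictly \emph{decrease} when a heavy password's mass is dispersed over many light ones; indeed, exploiting exactly this effect is the whole point of a good policy. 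So the penalty for an unsatisfied clause is evadable, and the reduction as sketched does not establish the ``unsatisfiable $\Rightarrow$ objective above $\theta$'' direction. The only way to make a user group's mass non-dispersible in the ranking model is to force the group members' preference lists to coincide below the gadget positions, so that no matter what is banned they re-concentrate on a common next choice --- this is precisely the trick the paper uses (the two lists attached to each edge ``must be identical from position 2 onwards''). Your proposal never imposes such a condition, and separately, the assignment-enforcing gadget (why allowing both $T_i$ and $F_i$ strictly increases top-$k$ mass under \emph{every} policy) and the calibration of $k$, $\theta$, and group sizes are deferred rather than carried out, so the hard part of the argument is missing rather than merely sketchy.

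For contrast, the paper avoids all weight calibration by reducing from Vertex Cover and working at the degenerate threshold: with $k = g+e-t-1$ the question becomes whether the optimal $p(k,\mathcal{A})$ equals $1$ or is strictly less than $1$, which is purely a question about whether some policy can push the \emph{support size} of the induced distribution above $k$. Banning vertex-passwords of a cover exposes all edge-passwords at the tops of lists (large support), while the identical-tails condition plus a counting argument shows that without a small cover no set of at most $t$ bans can create support larger than $k$. If you want to salvage a SAT-based reduction, you would need to (i) give all users in a gadget identical continuations so their mass cannot be dispersed, and (ii) ideally rephrase your threshold $\theta$ as a support-size condition (i.e., $\theta = 1$ versus $\theta < 1$) so that the ``calculation-heavy'' calibration disappears; at that point you would essentially have rediscovered the paper's argument with a harder starting problem.
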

\begin{proof}
We prove the theorem using a reduction from the Vertex-Cover problem. Given a graph $G$ over $g$ vertices and $e$ edges and an integer $t$, we first define
\[\PasswordSpace =  \{w_u ~:~ u\in V(G)\} \cup \{w_{u,v} ~:~ (u,v)\in E(G)\}   \]
and observe that $|\PasswordSpace| = g+e$. We also construct the following $n=2e$ preference-lists, where for every edge $(u,v)\in E(G)$ we have the two lists:
\begin{eqnarray*}
&& \ell_{u,v} =  w_u, w_{u,v}, \ldots \cr
&&\ell_{v,u} =  w_v, w_{u,v}, \ldots
\end{eqnarray*}
where the choice of passwords below position $2$ is arbitrary, but both rankings must be identical from position $2$ onwards. Finally, we set $k = g+e-t-1$.

Given a policy $\mathcal{A}\subseteq\PasswordSpace$, we denote all banned words as $\mathcal{B} = \PasswordSpace\setminus \mathcal{A}$. We denote by $L_\mathcal{B}$ as the set of words that at least one user ranks first after banning all words in $\mathcal{B}$. Observe, $L_\emptyset = \{w_u ~:~ u\in V(G)\}$. Using this notation, we show this reduction indeed proves $NP$-hardness.

First, suppose $G$ has a vertex cover $C$ of size $\leq t$. Then by banning all passwords $\mathcal{B} = \{w_v ~:~ v\in C\}$ we now have $L_{\mathcal{B}} = \PasswordSpace \setminus \mathcal{B}$, because for every $(u,v)\in E(G)$ either $w_u$ or $w_v$ are banned, so the word $w_{u,v}$ appears at the top of at least one of the two lists $\{\ell_{u,v}, \ell_{v,u}\}$. Therefore, the $n$ preference-lists induce a distribution whose support contains $g+e - |\mathcal{B}| \geq g+e-t$ words, thus $p(g+e-t-1, \mathcal{A}) < 1$. 

Conversely, suppose all vertex covers of $G$ are of size at least $t+1$. Let $\mathcal{A}$ be any set of banned words. Clearly,  if $|\mathcal{B}| \geq t+1$ then the distribution induced by the $n$ preferences-lists has support of size at most $g+e-t-1$, which means that $p(g+e-t-1, \mathcal{B})=1$. Otherwise, $|\mathcal{B}| \leq t$, and we denote the set of vertices $C = \{v : w_v \in \mathcal{B}\}$. Observe, since any vertex cover of $G$ must contain $\geq t+1$ vertices, then there has to be at least $t+1-|C|$ edges that $C$ does not cover (since we can always complete $C$ to a vertex cover by adding one vertex from each uncovered edge). Therefore, there have to be at least $t+1-|C|$ words that do not appear at the top of any preference list. We conclude that the distribution induced by the $n$ preference-lists has a support of size at most 
\[ |L_{\mathcal{B}}| = g - |C| + e - (t+1 - |C|) \leq g+e-t-1\]
thus $p(g+e-t-1, \mathcal{A})=1$.
\end{proof}

From the same reduction described in Theorem~\ref{thm:hardnessKisParameter} we get $UGC$-hardness of approximation. While there are sub-exponential time algorithms to solve the Unique Games problem \cite{arora2010subexponential}, there are no known polynomial time algorithms. Many famous approximation hardness results are based on the Unique Games Conjecture (e.g., $2-\epsilon$ hardness for vertex cover \cite{khot2008vertex}). Our reduction relies on a result in~\cite{AustrinKS11}, which says that vertex cover is hard to approximate up to a (say) $1.5$-factor even on bounded degree graphs. Because we start with a bounded degree graph we can argue that each password in our reduction appears at the top of at most $d$ preference-lists for some constant $d$. See the appendix for a formal proof.
%Command Environment to keep the statements consistent in the appendix. Explicitly mentioned the singleton rules setting and rankings model so that the theorem makes sense in the appendix without all of the context.
\newcommand{\thmUGCHardness}{There exists a constant $c>1$ such that it is $UGC$-hard for a $\poly(n,N,k)$-time algorithm to $c$-approximate the optimal $p(k,\mathcal{A})$ in the singleton rules setting and the rankings model.}
\begin{theorem} \label{thm:UGCHardness}
\thmUGCHardness
\end{theorem}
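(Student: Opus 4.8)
The plan is to reuse the reduction from the proof of Theorem~\ref{thm:hardnessKisParameter} essentially verbatim, but to feed it a \emph{gap} instance of Vertex-Cover rather than an exact one, so that the qualitative dichotomy ``$p(k)<1$ versus $p(k)=1$'' becomes a quantitative constant-factor gap. Concretely, I would invoke the result of \cite{AustrinKS11}: there exist constants $\alpha<\beta$ (with $\beta/\alpha$ as large as, say, $1.5$) and a degree bound $D$ such that, assuming the UGC, no polynomial-time algorithm can distinguish graphs $G$ on $g$ vertices of maximum degree at most $D$ whose minimum vertex cover has size at most $\alpha g$ from those whose minimum vertex cover has size at least $\beta g$. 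Given such a $G$ with $e$ edges, I build exactly the same password space $\PasswordSpace=\{w_u\}\cup\{w_{u,v}\}$ and the same $n=2e$ preference lists $\ell_{u,v},\ell_{v,u}$ as before, and set the target to $k=g+e-\lfloor\beta g\rfloor$.

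For the soundness (NO) case I would argue, exactly as in the proof of Theorem~\ref{thm:hardnessKisParameter}, that when every vertex cover of $G$ has size at least $\beta g$, \emph{every} banned set $\mathcal{B}$ yields a support of size at most $g+e-\beta g\le k$: if $|\mathcal{B}|\ge\beta g$ this is immediate since banning removes that many words from the $g+e$ total, and otherwise the set $C=\{v:w_v\in\mathcal{B}\}$ leaves at least $\beta g-|C|$ edges uncovered, and each such edge-word $w_{u,v}$ is then absent from the support. Hence $p(k,\mathcal{A})=1$ for every policy, so the optimum is exactly $1$.

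For the completeness (YES) case, if $G$ has a vertex cover $C$ with $|C|\le\alpha g$, then banning $\{w_v:v\in C\}$ produces a support of size at least $g+e-\alpha g$, which exceeds $k$ by at least $\lfloor\beta g\rfloor-\alpha g\ge(\beta-\alpha)g-1$ distinct passwords. Each such password is picked by at least one user and hence carries probability at least $1/n$, so the mass lying outside the $k$ most popular passwords is at least $((\beta-\alpha)g-1)/n$. This is the one place the bounded-degree hypothesis is essential: since the maximum degree is at most $D$ we have $n=2e\le Dg$, so for large $g$ this excess mass is at least a fixed constant $\gamma:=(\beta-\alpha)/(2D)>0$. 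Therefore the optimal policy satisfies $p(k,\mathcal{A})\le 1-\gamma$.

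Combining the two cases yields a multiplicative gap: the optimum is at most $1-\gamma$ in the YES case and exactly $1$ in the NO case. Consequently any $c$-approximation algorithm with $1<c<1/(1-\gamma)$ would return, in the YES case, a policy with $p(k,\mathcal{A})\le c(1-\gamma)<1$, whereas in the NO case every policy (including its output) has $p(k,\mathcal{A})=1$; checking whether the returned policy attains value $1$ would thus decide the UGC-hard gap problem, a contradiction. Taking $c=1/(1-\gamma)$, or any constant strictly below it, proves the claim. I expect the main obstacle to be precisely the completeness bound --- amplifying the single ``extra password'' of the exact proof into $\Omega(g)$ extra passwords while keeping $n=O(g)$ --- which is exactly what passing to a gap version of Vertex-Cover on bounded-degree graphs accomplishes.
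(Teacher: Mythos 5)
Your proposal is correct and is essentially the paper's own proof: both feed the bounded-degree ($d$-regular) UGC-hard gap instances of Vertex-Cover from \cite{AustrinKS11} into the reduction of Theorem~\ref{thm:hardnessKisParameter} with $k$ shifted by the larger cover size, and both use the degree bound to ensure $n=O(g)$, so that in the YES case the $\Omega(g)$ extra words in the support carry $\Omega(1)$ total mass, yielding the gap $p(k,\mathcal{A}^*)\leq 1-\Omega(1)$ versus $p(k,\mathcal{A})=1$ for every policy in the NO case. The only cosmetic difference is bookkeeping: you lower-bound the excess mass directly by $\left((\beta-\alpha)g-1\right)/n$, while the paper phrases the same bound as each word of $L_{\mathcal{B}}$ having probability $\Omega(1/|L_{\mathcal{B}}|)$.
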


\subsection{Negative Rules: Hardness of Approximation for $k=1$} \label{subsec:HardnessOfApproximatingOptimalCombinationOfSubsets}

%Suppose that the organization is given sets on passwords that can be banned, and wants to minimize $p_1$. We show hardness for approximation ratio $\sim N^{1/4}$.

We next turn to negative rules, where we show that the problem is extremely difficult even for $k=1$. Though the proof appears in the appendix, it is quite interesting and we encourage the reader to take a look.

%Given a family of negative rules $\mathcal{F} = \{S_1,...,S_m\}$ and a subset $S' \subseteq \mathcal{F}$ we let $p_1'$ denote the probability of the most likely password after all of the sets of passwords in $\bigCup S'$ are banned.  We use $S^* \subseteq \mathcal{F}$ ($p_1^*$) to denote the optimal subset of passwords to ban. 

\begin{theorem}\label{thm:HardnessOfApproximationOptimalCombinationOfSubsets}
Let $\epsilon > 0$. Unless $P=NP$ there is no polynomial time algorithm (in $N,n,m$) that approximates $\min_{S\subseteq [m]} p(1,\mathcal{A}_S)$ to a factor of $n^{1/3-\epsilon}$ in the negative rules setting and the rankings model. \ap{See below, possibly change bound to $\Omega(n^{1/3})$}
\end{theorem}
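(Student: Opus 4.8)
The plan is to reduce from the gap version of \textsc{Maximum Independent Set}: given a graph $G=(V,E)$ on $g$ vertices, for any constant $\delta>0$ it is NP-hard to distinguish $\alpha(G)\geq g^{1-\delta}$ from $\alpha(G)\leq g^{\delta}$ (H{\aa}stad, Zuckerman). The exponent $\tfrac13$ will come from arranging the number of users to grow like $n=\tilde\Theta(g^{3})$, so that a multiplicative gap of $\approx g^{1-O(\delta)}$ in $p(1,\mathcal{A}_S)$ becomes $n^{1/3-\epsilon}$. Crucially, and in contrast to the singleton case (which is in P), the rules I build are genuinely non-singleton: banning a ``bad'' password forces collateral banning of a ``good'' one, and this covering tension is exactly what makes negative rules hard.

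\emph{Construction.} For each vertex $v$ introduce a good password $g_v$, and for each edge $e\in E$ a bad password $b_e$. The negative rule for $v$ is $R_v=\{g_v\}\cup\{b_e : e\ni v\}$, so a policy is indexed by a vertex set $S\subseteq V$ that bans $\bigcup_{v\in S}R_v$. Add one overflow password $\sigma$ lying in no rule (hence always allowed). Users come in two families: (i) $M=\tilde\Theta(g)$ \emph{spreaders}, each ranking a uniformly random permutation of $\{g_v\}_{v\in V}$, then $\sigma$, then private dummies; and (ii) for every edge $e$, a block of $M'=2M$ users ranking $b_e$ first and then private dummies. Since $e\leq g^2$, this gives $n=M+eM'=\Theta(g^2 M)=\tilde\Theta(g^3)$.

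\emph{Combinatorial correspondence and the gap.} Because $b_e\in R_u\cap R_v$ precisely for $e=(u,v)$ while $g_v\in R_v$ only, activating $S$ bans every $b_e$ iff $S$ is a vertex cover, and the surviving good passwords are exactly $\{g_v : v\notin S\}$, i.e.\ the independent set $V\setminus S$. A $p_1$-minimizing policy must ban all bad passwords (an allowed $b_e$ alone has probability $M'/n=\Theta(g^{-2})$), and then its quality is governed by the surviving independent set. In a YES instance, take $S$ to be the complement of a maximum independent set $I^\ast$ with $|I^\ast|\geq g^{1-\delta}$: every spreader lands on some $g_v$ with $v\in I^\ast$, and balanced spreading yields $p_1\approx \frac{M/|I^\ast|}{n}=\Theta(g^{-3+\delta})$. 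In a NO instance, any optimal policy bans all bads, so at most $\alpha(G)\leq g^{\delta}$ good passwords survive; each spreader lands on one of these or on $\sigma$, so pigeonhole forces some password to carry $\geq M/(g^{\delta}+1)$ spreaders, giving $p_1\geq \frac{M/(g^\delta+1)}{n}=\Omega(g^{-2-\delta})$. (The overflow $\sigma$ blocks the degenerate escape of banning all good passwords, which would otherwise dump all $M$ spreaders onto distinct dummies; instead they pile onto $\sigma$ and make $p_1\geq M/n$ huge.) The resulting gap is $\Omega(g^{1-2\delta})=\Omega(n^{1/3-\epsilon})$, so any $n^{1/3-\epsilon}$-approximation would separate the two cases.

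\emph{Main obstacle.} The delicate point is the YES-case upper bound, which requires the spreaders to land \emph{balanced} over the surviving set $I^\ast$ --- a set unknown at construction time. I would argue this by concentration: for a fixed $I^\ast$ of size $\alpha$, the $M=\tilde\Theta(g)\gg\alpha$ random permutations send each spreader to a uniformly random element of $I^\ast$, so a Chernoff bound makes the maximum load $(1+o(1))M/\alpha$ with high probability. The NO-case bound is deterministic (pure pigeonhole) and always holds, so this already yields a randomized reduction; derandomizing it --- either by a union bound over the $\leq 2^{g}$ candidate independent sets (which inflates $M$ and must be kept within the $g^{3}$ budget) or by replacing the random permutations with an explicit nearly-balanced (dispersing) family --- is the crux. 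The remaining bookkeeping is to confirm that the bad-password blocks force $S$ to be a vertex cover under \emph{every} near-optimal policy, and that no mixed strategy (leaving a few bad passwords to save extra good ones) beats the stated bounds.
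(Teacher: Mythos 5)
Your reduction starts from the right place (gap Max-Independent-Set, with $n=\tilde\Theta(g^3)$ so that a gap of $g^{1-O(\delta)}$ reads as $n^{1/3-\epsilon}$), and your NO-case analysis is sound, but the proposal has a genuine gap that you flag yourself and do not close: the YES-case upper bound needs the $M$ spreaders to be nearly balanced over a maximum independent set $I^*$ that the reduction cannot identify, and the only mechanism you offer is random permutations. As written this is a \emph{randomized} reduction, so it proves hardness unless $NP\subseteq RP$ (or $BPP$), not unless $P=NP$ as the theorem claims --- a distinction the paper itself treats as substantive, since its Theorem~\ref{thm:NrmPrbHardnessOptimizeP1} explicitly weakens its hypothesis to $NP=BPP$ exactly when its reduction is randomized, while the present theorem is stated under $P\neq NP$ and proved deterministically. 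Neither of your proposed repairs closes the hole within the budget. A union bound cannot be restricted to actual maximum independent sets (the reduction cannot enumerate them), so it must cover all $\binom{g}{\approx g^{1-\delta}}$ large subsets; driving the Chernoff failure probability below that count forces $M=\tilde\Omega(g^{2-2\delta})$, hence $n=\tilde\Theta(g^{4-2\delta})$, and the gap $g^{1-2\delta}$ collapses to roughly $n^{1/4}$ --- the exponent degrades exactly as you fear. The explicit ``dispersing family'' alternative is left unspecified, and it is not a detail: if the permutations were adversarial (say, all identical), every spreader would land on the same surviving good password and $p_1\geq M/n$ always, so some such pseudorandom object is essential to your construction.

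It is worth seeing how the paper's proof sidesteps balance entirely, because the mechanism is genuinely different from yours. Rather than a shared pool of good passwords over which users must spread (the source of your concentration requirement), the paper gives its analogue of your spreaders a shared prefix $A_1,\ldots,A_K$ of ``slot'' passwords ($K=g^{\epsilon}$) followed by \emph{private} per-user passwords. Rules are indexed by vertex-slot pairs $R_{u,i}$ (``$u$ fills slot $i$''), and auxiliary blocks of passwords ($C$'s for edges, $D$'s for reusing a vertex in two slots, each backed by a shared sink password $X$) guarantee that any policy banning all the $A_i$'s without forcing $g$ users onto a common password must activate rules encoding an independent set of size $K$, one vertex per slot. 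In the YES case every user then falls to a private password and $p_1=1/n$ \emph{exactly}, deterministically, with no concentration argument; in the NO case $p_1\geq g/n$. That yields the clean $\Omega(n^{1/3})$ gap under $P\neq NP$. If you want to salvage your construction, replacing the randomly-permuted spreaders with this private-tail/slot mechanism is the fix; as it stands, the derandomization step is a real hole rather than bookkeeping.
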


\section{Normalization Model: Complexity Results} \label{sec:NormalizedProbabilities}

In this section we focus on complexity results for the normalization model. Here the structure of the input to our problem is a bit different: For each password $w\in\PasswordSpace$ we are given the probability $\Pr[w]$ that $w$ is selected by a random user when $\mathcal{A}=\PasswordSpace$. Note that now we can give the distribution explicitly because it requires $N$ numbers (whereas a distribution over rankings requires $N!$ numbers). This distribution induces a distribution over $\PasswordSpace$ for any password composition policy $\mathcal{A}$ by normalizing probabilities, as explained in Section~\ref{sec:Models}.

Because the normalization model is a special case of the ranking model our algorithms for the ranking model can also be applied in the normalization model. The question is whether or not the hardness results carry over.  

%In this section we assume that the organization knows $\Pr_{y \sim D}\left[w = y\right]$ for every password $w$. 

We first consider the singleton rules setting with large $k$, and show that that we can compute $\arg\min_{\mathcal{A} \subseteq \mathcal{P}} p\left(k,\mathcal{A}\right)$ in polynomial time in $N$ (Theorem \ref{thm:NrmPrbEfficientAlgAnyK}). This result separates the normalization model from the ranking model (e.g., compare Theorems \ref{thm:NrmPrbEfficientAlgAnyK} and \ref{thm:hardnessKisParameter}). However, it does not extend to the positive rules setting. In fact, we show that optimizing $p\left(k,\mathcal{A}_S\right)$ is NP-Hard when $k$ is a parameter (Theorem \ref{thm:HardnessAddingSubsetsLargeK}).

With negative rules $R_1,...,R_m$ we show that it is hard to $c_0$-approximate  $\arg\max_{S \subseteq [m]} p\left(1,\mathcal{A}_S\right)$ (Theorem \ref{thm:NrmPrbHardnessOptimizeP1}). However, we cannot rule out the possibility of an efficient $c$-approximation algorithm for some constant $c$ in the normalization model (recall that Theorem \ref{thm:HardnessOfApproximationOptimalCombinationOfSubsets} ruled out the possibility of a $c$-approximation algorithm in the ranking model for any $c$).

\subsection{Singleton Rules: Efficient Algorithm for large $k$}
\label{subsec:PolyAlgForAnyKNormalizedProbabilities}

We present SortAndOptimize --- an efficient algorithm to optimize $p\left(k, \mathcal{A}\right)$ in the singleton rules setting for {\em any} value of $k$. 
%This result separates the normalization model from the ranking model --- in which is is hard to optimize $p\left(k, \mathcal{A}\right)$ in the singleton rules setting when $k$ is a parameter. 
The key intuition behind our algorithm is that if $w_1 \in \PasswordSpace$ is the most likely password then $w_1$ will remain the most likely allowed password unless we ban it --- a property that does not hold in the rankings model. A formal proof of Theorem \ref{thm:NrmPrbEfficientAlgAnyK} can be found in the appendix.
\newcommand{\thmNrmPrbEfficientAlgAnyK}{For every $k$, Algorithm~\ref{alg:SortAndOptimize} computes $\arg \min_{\mathcal{A}} p\left(k,\mathcal{A} \right)$ in the singleton rules setting of the normalized probabilities model, in time $O(N\log(N))$.}
\begin{theorem} \label{thm:NrmPrbEfficientAlgAnyK}
\thmNrmPrbEfficientAlgAnyK
\end{theorem}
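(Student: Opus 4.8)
The plan is to exploit the single structural feature that distinguishes the normalization model from the ranking model: since $\Pr[w \mid \mathcal{A}] = \Pr[w]/\Pr[\mathcal{A}]$, normalization scales every allowed probability by the common factor $1/\Pr[\mathcal{A}]$, so the relative order of passwords is invariant to the choice of $\mathcal{A}$. Concretely, I sort $\PasswordSpace$ once so that $\Pr[w_1] \ge \Pr[w_2] \ge \cdots \ge \Pr[w_N]$ (discarding passwords of zero probability, which never affect any ratio). For any policy $\mathcal{A}$ the $k$ most likely allowed passwords are then just the first $k$ surviving elements of this fixed order, and hence
\[
 p(k, \mathcal{A}) = \frac{\sum_{w \in T_k(\mathcal{A})} \Pr[w]}{\Pr[\mathcal{A}]},
\]
where $T_k(\mathcal{A})$ denotes the $k$ highest-probability allowed passwords. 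This is exactly the ``key intuition'' quoted before the theorem, and it is precisely what fails in the ranking model.

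Next I would argue that it suffices to consider policies that ban a \emph{prefix} $\{w_1, \ldots, w_b\}$ of the sorted order, via an exchange argument on an arbitrary optimal policy $\mathcal{A}^*$ with banned set $\mathcal{B}^* = \PasswordSpace \setminus \mathcal{A}^*$. The argument has two parts. (a) No banned password lies below the top-$k$ cutoff: if $w \in \mathcal{B}^*$ has at least $k$ allowed passwords more probable than it, then un-banning $w$ leaves $T_k$ (hence the numerator) unchanged while strictly increasing $\Pr[\mathcal{A}]$, so $p(k, \cdot)$ strictly decreases, contradicting optimality. Thus, writing $i_k$ for the global index of the $k$-th allowed password, every banned password has index $< i_k$. (b) Within the initial block $w_1, \ldots, w_{i_k}$ exactly $k$ passwords are allowed and $b := i_k - k$ are banned, while all of $w_{i_k+1}, \ldots, w_N$ are allowed. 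Holding $i_k$ fixed, the denominator equals $A + C$, where $A = \sum_{w \in T_k}\Pr[w]$ is the numerator and $C = \sum_{i > i_k}\Pr[w_i]$ is constant; since $x \mapsto x/(x+C)$ is increasing, minimizing $p(k,\cdot)$ reduces to minimizing $A$. Among the $\binom{i_k}{k}$ choices of which $k$ of the first $i_k$ passwords to allow, the sum $A$ is smallest when we allow the $k$ least probable of them, namely $w_{i_k-k+1}, \ldots, w_{i_k}$, which is the same as banning the prefix $w_1, \ldots, w_b$. Hence some prefix ban attains the optimum.

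Given this reduction, the algorithm sorts once ($O(N\log N)$), precomputes suffix sums of the sorted probabilities, and then for each $b = 0, 1, \ldots, N-k$ evaluates
\[
 f(b) = \frac{\sum_{i=b+1}^{b+k}\Pr[w_i]}{\sum_{i=b+1}^{N}\Pr[w_i]}
\]
in $O(1)$ time, returning the prefix (and corresponding $\mathcal{A}$) minimizing $f$. The sliding-window numerator and the tail denominator are both maintained incrementally, so the scan is $O(N)$ and the total running time is dominated by the sort, giving $O(N\log N)$ as claimed. Degenerate cases, such as fewer than $k$ positive-probability passwords, which force $p(k,\mathcal{A})=1$ for every nonempty policy, are handled trivially.

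The step I expect to be the main obstacle is part (b) of the exchange argument: one must reason about numerator and denominator simultaneously, and the clean conclusion hinges on the observation that fixing the cutoff index $i_k$ \emph{decouples} the two, turning the tail mass $C$ into a constant and making the ratio monotone in the numerator alone. The remaining care is bookkeeping, namely handling ties in probability (irrelevant once the sorted order fixes a consistent tie-break) and zero-probability passwords (which may be left allowed without changing any ratio). Everything else, order invariance and the $O(1)$-per-$b$ evaluation via prefix/suffix sums, is routine.
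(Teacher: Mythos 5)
Your proposal is correct and follows essentially the same route as the paper: sort by probability, prove via an exchange argument that some optimal policy bans a prefix of the sorted order, and then find the best prefix with a linear scan after the $O(N\log N)$ sort. The only difference is cosmetic --- the paper performs iterated pairwise swaps (banning the most popular allowed word $w_{i_1}$ in exchange for a banned ``gap'' word $w'$) and verifies the ratio inequality by direct algebra, whereas your part (b) packages exactly the same computation as the monotonicity of $x \mapsto x/(x+C)$ with the tail mass $C$ held fixed.
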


\begin{algorithm}
\caption{SortAndOptimize}
\begin{algorithmic}
\State {\bf Input:}
\State {Password space } $\PasswordSpace$ and a probability distribution over $\PasswordSpace$.
\State {Integer} $k$.
\State {\bf Sort} the words in $\PasswordSpace$ from highest to lowest probability, $w_1, w_2, \ldots, w_N$.\\
\Return the set $\mathcal{A}_i = \{ w_j ~:~ j \geq i\}$, where $i$ minimizes the ratio \[ p(k,\mathcal{A}_i) =  \frac {\sum_{i \leq j \leq i+k} \Pr[w_j]} {\sum_{j \geq i} \Pr[w_j]} \]
\end{algorithmic}
\label{alg:SortAndOptimize}
\end{algorithm}

\subsection{Negative Rules: Hardness for $k=1$} \label{subsec:HardnessOptimizeP1BanningNrmPrb}

%{\em Instance: } $\PasswordSpace$, $\Pr\left[w ~\vline~\PasswordSpace\right]$ for each $w \in \PasswordSpace$, and rules $R_1,...,R_m \subseteq \PasswordSpace$. \\
%{\em Solution: } $S* = \arg\min_{S \subseteq \{1,...,m\}} p\left(1,\mathcal{A}_S\right)$

We next prove an inapproximability result that is somewhat weaker than the one that we obtained for the more general ranking model.

\begin{theorem}\label{thm:NrmPrbHardnessOptimizeP1}
There exists some constant $c_0>1$ such that unless $NP=BPP$ no polynomial time algorithm (in $n, N, m$) can $c_0$-approximate $\min_{S \subseteq [m]} p\left(1,\mathcal{A}_S\right)$ in the negative rules setting and the normalization model.
\end{theorem}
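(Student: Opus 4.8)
The plan is to reduce from a gap version of an NP-hard covering problem---most naturally bounded-degree Vertex Cover, for which it is hard to distinguish instances whose minimum cover has size $\le t$ from those where every cover has size $\ge (1+\gamma)t$, for some constant $\gamma>0$. The guiding intuition is that in the normalization model $p(1,\mathcal{A}_S)=\frac{\max_{w\in\mathcal{A}_S}\Pr[w]}{\Pr[\mathcal{A}_S]}$, so making $p_1$ small requires banning every ``heavy'' password (to shrink the numerator) while retaining as much total probability mass as possible (to keep the denominator large). With negative rules we cannot ban an arbitrary set of heavy passwords; each activated rule bans a fixed subset, and the tension between \emph{covering} all heavy passwords and \emph{not destroying too much mass} is exactly where the combinatorial hardness will live.

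Concretely, given a graph $G=(V,E)$ I would build the following normalization instance. For each edge $e\in E$ create a heavy password of raw probability $h$, and for each vertex $v\in V$ create a light ``mass block'' of many filler passwords of equal tiny probability $\mu$ and total block mass $B$. The rules are negative and indexed by vertices: activating $R_v$ bans all edge-passwords incident to $v$ together with $v$'s own mass block. Since an edge $e=(u,v)$ lies in both $R_u$ and $R_v$, banning every heavy password is possible if and only if the activated vertices form a vertex cover, and activating a cover $C$ removes exactly the $|C|$ corresponding mass blocks. Choosing $h$ far larger than any filler contribution guarantees that any solution leaving a heavy password unbanned has $p_1\ge h/\Pr[\mathcal{A}_S]$, which will dwarf the target value; hence the only sensible solutions are full covers, and among those $\Pr[\mathcal{A}_S]=(n_V-|C|)B$ after the edge mass and the cover's blocks are removed. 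By monotonicity (Equation~\eqref{claim:banningNecessaryClaim}) banning extra blocks only hurts, so the optimum over full covers is $p(1,\mathcal{A}_S)=\mu/((n_V-|C^*|)B)$, where $|C^*|$ is the minimum vertex cover and $\mu$ is the largest surviving raw probability.

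This identity turns the minimization of $p_1$ into minimum Vertex Cover. In the YES case a cover of size $\le t$ yields $p_1\le \mu/((n_V-t)B)$, while in the NO case every full cover satisfies $|C|\ge(1+\gamma)t$, so $p_1\ge \mu/((n_V-(1+\gamma)t)B)$. The resulting gap is $\frac{n_V-t}{n_V-(1+\gamma)t}$, which is a constant bounded away from $1$ precisely when the minimum cover occupies a constant fraction of the vertices---a property that holds for the hard bounded-degree instances (where $|E|,|C^*|=\Theta(n_V)$). Setting $c_0$ strictly below this ratio, a $c_0$-approximation would separate the two cases and decide Vertex Cover.

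The main obstacle is quantitative rather than structural: obtaining a genuine \emph{constant} multiplicative gap requires the banned mass to scale linearly with the cover size while the retained mass stays of the same order ($\Theta(t)$ blocks out of $\Theta(n_V)$), and it requires ruling out clever partial solutions that trade an unbanned heavy password against saved mass---handled by taking $h$ large. The reason the conclusion is stated under $NP=BPP$ rather than $P=NP$ is that making these weights and incidence structures line up cleanly (e.g.\ regularizing the instance, or perturbing the probabilities to avoid degenerate ties that would blur the gap) is most easily done by a randomized construction whose gap holds with high probability; derandomizing it, or finding an explicit gap-preserving reduction, is the delicate point I would expect to spend the most effort on.
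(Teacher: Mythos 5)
Your proposal is correct, and it takes a genuinely different route from the paper's proof. The paper reduces from Set-Cover instances with a \emph{logarithmic} gap (a cover of size $t$ versus all covers of size $\geq (1-\epsilon)c\,t\ln g$); in such instances the cover size is negligible compared to the number of rules, so disjoint per-rule mass blocks would yield a gap of only $1+o(1)$. To get a constant gap the paper instead attaches to each rule a set $F_i$ from a specially constructed \emph{overlapping} family in which every union of at most $s$ sets has cardinality proportional to the number of sets (Lemma~\ref{clm:FamilyOfSetsWithLargeUnions}), so that the YES case destroys an $o(1)$ fraction of the light mass while the NO case destroys an $\Omega(1)$ fraction; that family is built by a randomized construction, which is the sole reason the theorem is conditioned on $NP \neq BPP$. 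Your reduction starts instead from gap Vertex Cover on bounded-degree graphs, where the minimum cover is a constant fraction of the number of vertices (for cubic graphs, $t \geq n_V/2$, so the gap $\tfrac{n_V-t}{n_V-(1+\gamma)t} \geq 1+\gamma/2$), and this is precisely what lets you take the mass blocks to be pairwise disjoint, one per rule, making the surviving mass exactly $(n_V-|C|)B$ with no expansion-type lemma at all. This buys two things: the construction is explicit and deterministic, so you in fact obtain hardness under $P \neq NP$, which is stronger than the paper's $NP \neq BPP$ (if $P=NP$ then $PH$ collapses and $BPP=P=NP$); and the proof is substantially simpler. Consequently, the worry in your last paragraph is unfounded: no regularization or random perturbation is needed, since with disjoint blocks and, say, $h \geq 1/M$ (where $M$ is the number of fillers per block, chosen polynomially large) every quantitative requirement is met by explicit rational weights. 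The only details left to nail down are routine --- excluding the degenerate policy that bans every password, and invoking the standard APX-hardness of Vertex Cover on degree-bounded graphs in its gap formulation.
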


We will require the following construction; the proof is given in the appendix. 

\begin{lemma}\label{clm:FamilyOfSetsWithLargeUnions}
Fix $m$ and $s$ such that $m\geq s$. There exists a domain $D$ of size $\Theta(s^2\log(m))$ and a family of $m$ sets, $F_1, F_2, \ldots, F_m\subseteq D$, such that each set in the family contains $\tfrac {|D|}{2s}$ elements, and for every $C\subseteq [m]$ of size $|C| \leq s$, we have that the size of the union $\left| \bigcup_{i\in C} F_i \right| \geq \tfrac {|D|}{2s} \tfrac{|C|} 4$. This domain can be constructed in randomized $\poly(s,m)$ time.
\end{lemma}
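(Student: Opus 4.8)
The plan is to use the probabilistic method: build the family at random and show the union property holds with overwhelming probability, which simultaneously establishes existence and furnishes the randomized construction.

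Concretely, I set $d = |D| = \Theta(s^2 \log m)$, say $d = 64 s^2 \lceil \ln m \rceil$ rounded up to a multiple of $2s$, and let $t = d/(2s)$. I draw each $F_i$ independently and uniformly among the subsets of $D$ of size exactly $t$. By construction every set has exactly $\frac{|D|}{2s}$ elements, and the whole sample is produced in randomized $\poly(s,m)$ time since $|D|$ is polynomial and there are only $m$ sets. It then remains to argue that, with positive (in fact overwhelming) probability, every union of at most $s$ of the sets is large; note that since we cannot afford to verify this over exponentially many $C$, we rely on a Monte Carlo guarantee rather than an explicit check.

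Fix $C \subseteq [m]$ with $|C| = c \le s$ and let $U_C = \bigcup_{i \in C} F_i$. For a fixed element $x \in D$, each set independently misses $x$ with probability $1 - t/d = 1 - \frac{1}{2s}$, so $\Pr[x \notin U_C] = (1 - \frac{1}{2s})^c \le e^{-c/(2s)}$, whence $\Pr[x \in U_C] \ge 1 - e^{-c/(2s)} \ge \frac{c}{4s}$, where the last step uses $1 - e^{-y} \ge y/2$ on $y = \frac{c}{2s} \in [0,\tfrac12]$. Summing over $x$ gives $\mathbb{E}[|U_C|] \ge \frac{dc}{4s}$, which is exactly twice the target $\frac{d}{2s}\cdot\frac{c}{4} = \frac{dc}{8s}$. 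The indicators $\{\mathbb{1}[x \in U_C]\}_{x \in D}$ are negatively associated (inclusion indicators within a single fixed-size random set are negatively associated, distinct sets are independent, and each union indicator is a monotone function of a disjoint block of the underlying variables), so the Chernoff lower-tail bound applies and $\Pr[\,|U_C| < \tfrac12 \mathbb{E}[|U_C|]\,] \le \exp(-\mathbb{E}[|U_C|]/8) \le \exp(-\tfrac{dc}{32 s})$. Finally I take a union bound over all admissible $C$: for each size $c$ there are $\binom{m}{c} \le m^c$ choices, so the total failure probability is at most $\sum_{c=1}^{s} m^c \exp(-\tfrac{dc}{32 s})$. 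With $d \ge 64 s^2 \ln m$ we get $\tfrac{d}{32 s} \ge 2 s \ln m \ge 2 \ln m$, so each term is at most $m^{-c}$ and the sum is at most $\tfrac{2}{m}$. Hence with probability at least $1 - 2/m$ the sampled family satisfies $|U_C| \ge \frac{|D|}{2s}\cdot\frac{|C|}{4}$ for all $C$ of size at most $s$ simultaneously.

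The step I expect to be most delicate is balancing the concentration against the union bound: the union must stay within a constant factor of its expectation \emph{simultaneously} for every subset of size at most $s$, so the per-subset Chernoff failure probability has to beat the $\sim m^c$ choices, which is precisely what choosing $|D| = \Theta(s^2 \log m)$ buys us (in fact $\Theta(s \log m)$ already suffices for the union property, and the stated $s^2$ merely adds slack). A secondary, purely technical point is justifying the Chernoff bound for the fixed-size — hence not independent — random sets; this is handled either by the negative-association argument above, or, as an alternative, by running the identical computation with independent inclusion at rate $\frac{1}{2s}$ (making the $Y_x$ fully independent) and then noting that the set sizes concentrate tightly enough around $\frac{d}{2s}$ to be equalized without eroding the union lower bounds.
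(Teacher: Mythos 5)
Your proof is correct, and it takes a genuinely different route from the paper's. The paper also uses the probabilistic method with a union bound over all $C$ of size at most $s$, but its random object is \emph{structured}: it sets $D = [2s]\times[d]$ with $d=\Theta(s\log m)$, draws $d$ independent random functions $\phi_j:[m]\to[2s]$, and lets $F_i$ contain exactly one element per column, $F_i=\{(\phi_j(i),j)\}_j$. The union size then decomposes as $\left|\bigcup_{i\in C}F_i\right|=\sum_{j=1}^d|\phi_j(C)|$; a birthday-paradox counting argument shows each column's image exceeds $|C|/2$ except with constant probability, and plain Hoeffding across the independent columns beats the $m^{O(s)}$ union bound. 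Your construction is \emph{unstructured} -- i.i.d.\ uniform $t$-subsets of an amorphous domain -- and the concentration step is per-element rather than per-column, which forces you to handle dependence among the inclusion indicators; your negative-association justification (fixed-size sampling is NA, NA is closed under independent unions and under monotone functions of disjoint blocks) is the standard and correct way to do this. The trade-off: the paper's column structure is precisely what lets it avoid any negative-dependence machinery, at the cost of a domain whose size $2s\cdot d=\Theta(s^2\log m)$ is forced by the $[2s]\times[d]$ format; your argument needs the NA toolkit but is more direct, achieves success probability $1-2/m$ rather than the paper's $\geq 1/2$, and correctly observes that $|D|=\Theta(s\log m)$ would already suffice for the union property. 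One caveat: your fallback alternative (independent Bernoulli inclusions, then equalizing set sizes) is the one place you gloss over a real issue -- trimming oversized sets shrinks unions, and making this harmless requires $t\gg\log m$ -- but since the NA argument is complete, the proof stands without it.
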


That is, each set in this family contains exactly the same fraction of the domain, and furthermore --- any union of $|C| \leq s$ sets has the property that its cardinality is proportional to $\Omega(|C|) |F_i|$. 

\begin{proof}[of Theorem~\ref{thm:NrmPrbHardnessOptimizeP1}]
We reduce from Set-Cover --- one of the classic $NP$-Complete problems \cite{karp1972reducibility}. We are given sets $S_1,...,S_m \subseteq U$, universe $U = \{1,...,g\}$, and an integer $t \leq m$, and we are asked whether there is a set $C \subseteq [m]$ of size $\leq t$ such that $U = \bigcup_{i \in C} S_i$.

It is a known fact that there exist Set-Cover instances, with $(g,m,t)$ all polynomially dependent of each other, that are hard to approximate to a factor of $c\ln n$~\cite{AlonMS06}. That is, on this particular family of instances, it is $NP$-hard to distinguish whether there exists a cover of size $t$ or all covers have size $(1-\epsilon)c\cdot t\ln n$. 

We now describe the reduction. Given a $(g,m,t)$-Set Cover instance, we set $s = c\cdot t\ln g = \Theta(t\ln t)$ and construct a domain $D$ and $m$ sets $F_1, F_2, \ldots, F_m\subseteq D$ as in Lemma~\ref{clm:FamilyOfSetsWithLargeUnions}. We then create the following password-banning instance. First $\mathcal{P}$ is the union of $D$ with additional disjoint $g$ words denoted $w_1,...,w_g$. Now, for each set $S_i$ in the Set-Cover we add a rule $R_i$ where $R_i = \{w_j\}_{j\in S_i} \cup F_i$. Finally, we set the words' probabilities as follows. Fixing some arbitrarily small $\delta > 0$, we set for every $i$ the probability $\Pr[w_i] = \tfrac{1-\delta}{g}$, and for every $x\in D$ we set the probability $\Pr[x] = \tfrac \delta{|D|}$.

Without loss of generality we can assume that $|D| \geq 100 g$ (because, for example, we can take $100g$ copies of the original $D$). Therefore, any policy that bans all of $\{w_1, w_2, \ldots w_g\}$ yet leaves a constant (say $>1/10$) fraction of $D$ has $p_1 \leq 10/|D|$, whereas any policy that keeps even one of the words in $\{w_1, w_2, \ldots, w_g\}$ has $p_1 \geq 1/(2g)$. Therefore, if the Set-Cover instance has a cover of size $\leq s = \Theta(t\ln g)$, then a $c_0$-approximation of the optimal banning-policy must find a cover for $\{w_1, w_2, \ldots, w_g\}$. We will assume from now on that our Set-Cover instance is such that it has a cover of size $\leq s$. \ap{Please clarify why we can make this assumption} (Indeed, if $s > t\log(t)$ then the instance is no longer $NP$-hard, since the greedy algorithm must return a cover of size $>t \log(t)$ which causes us to deduce that the optimal cover must have size $>t$.) \os{Clearer?}

So now, suppose our Set-Cover instance has a cover of size $t$. Then the respective union of rules bans every password in $\{w_1, w_2, \ldots, w_g\}$ and no more than $\tfrac t {2s} |D|$ words of $D$ (we get an upper bound by multiplying the size of each set by the number of sets).\ap{Added the parenthetical comment, please check}\os{True. Do you want to assume the reader cannot figure this one on her own?}\jb{My two cents. As a reader I usually appreciate clarifying remarks --- even if I didn't need them} This leaves a collection of $\left(1-\tfrac t {2s}\right) |D|$ equally likely words, so $p_1 = \left(1-\tfrac t {2s}\right)^{-1}|D|^{-1} = (1-O(1/\log(g)))^{-1}|D|^{-1} = (1+o(1))|D|^{-1}$.  In contrast, if all covers of our Set-Cover instance have size $s' \geq c\cdot t\ln(g)$ (where, because we assume some cover has size $\leq s$, we have $s'\leq s$,) then any collection of rules that bans all words in $\{w_1, w_2, \ldots, w_g\}$ must also ban at least $\tfrac {s'}{8s} |D|$ words out of $D$. This leaves at most $(1-\Omega(1))|D|$ words in $D$ and so $p_1 \geq (1-\Omega(1))^{-1}|D|^{-1}$. Denoting the latter constant as $c_0^{-1}$, we have that any $c_0-\epsilon$ approximation of the optimal banning-policy indicates the existence of a cover of cardinality $< c\cdot t\ln(g)$. 
\end{proof}

\subsection{Positive Rules: Hardness of Approximation for  Large $k$}

While we can show that it is possible to optimize $p\left(k,\mathcal{A}\right)$ in the singleton rules setting our result does not extend to the more general positive rules setting. We are able to show that it is NP-Hard to compute $\arg\min_{S \subseteq [m]} p\left(k,\mathcal{A}_S \right)$. However, our reduction does not imply approximation hardness so we cannot rule out the existence of a PTAS.

% Mentioned the normalization model explicitly so that the Theorem is not read out of context in the appendix. Also moved theorem statement into a command so that the wording changes don\rq{}t need to be duplicated.
\newcommand{\thmHardnessAddingSubsetsLargeK}{Unless $P=NP$ there is no polynomial time algorithm (in $N,m,n$) which outputs  $\arg\min_{S \subseteq [m]} p\left(k,\mathcal{A}_S \right)$ in the positive rules setting and the normalization model.}

\begin{theorem} \label{thm:HardnessAddingSubsetsLargeK}
\thmHardnessAddingSubsetsLargeK
\end{theorem}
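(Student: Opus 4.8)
The plan is to reduce from the decision version of \textbf{Set Cover}: given a universe $U$ with $|U| = g$, sets $S_1,\ldots,S_m \subseteq U$, and an integer $t$, decide whether some $t$ of the sets cover $U$. I will build a normalization-model instance with positive rules in which the optimal policy is \emph{forced} to correspond to a minimum set cover, so that an oracle computing $\arg\min_S p(k,\mathcal{A}_S)$ reveals the minimum cover size and decides the instance. The overall idea exploits the fact that with positive rules one can only \emph{add} passwords, and in the normalization model adding mass helps (it inflates the denominator) while adding a password to the top $k$ hurts (it inflates the numerator).

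For the construction, $\PasswordSpace$ will consist of $m$ ``heavy'' passwords $b_1,\ldots,b_m$, one per set, together with, for each element $u \in U$, a ``gadget'' $E_u$ of $\lambda$ ``light'' passwords. Up to a global normalization constant (which cancels in every ratio $p(k,\cdot)$) I set $\Pr[b_i] = 1+\eta$ and $\Pr[x] = 1$ for each light password $x$, where $\eta = m/(2g)$ and $\lambda = 3 + \lceil m/g \rceil$. The $i$-th positive rule is $R_i = \{b_i\} \cup \bigcup_{u \in S_i} E_u$, and the target is $k = m$. Then for a choice $S \subseteq [m]$ with $s = |S|$ active rules covering $c$ elements of $U$, the present heavy passwords are exactly $\{b_i : i \in S\}$ and the present light passwords are exactly the gadgets of covered elements, so the total mass is $s(1+\eta) + c\lambda$ and, since every heavy password outranks every light one and $s \le m = k$, the top-$k$ mass is $s(1+\eta) + (m-s) = m + s\eta$. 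Hence $p(k,\mathcal{A}_S) = (m + s\eta)/\bigl(s(1+\eta) + c\lambda\bigr)$.

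The core of the argument is to show that any minimizer of this ratio is a minimum set cover, which splits into two short calculations on the ratio. First, whenever $S$ fails to cover some element, adding any rule that covers a fresh element strictly decreases $p(k,\mathcal{A}_S)$; the enabling inequality holds because $\eta$ is small, namely $\eta g < m$, so the coverage gain $\lambda$ in the denominator always outweighs the penalty $\eta$ paid in the numerator for one extra heavy password. Thus every optimal policy covers all of $U$. Second, among full covers ($c = g$) the ratio $(m+s\eta)/\bigl(s(1+\eta)+g\lambda\bigr)$ is strictly increasing in $s$; the enabling inequality is $\eta g\lambda > m(1+\eta)$, which the choice $\lambda > 2 + m/g$ guarantees. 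Thus every optimal policy is a cover of minimum size, and it contains no redundant rule (removing one would keep full coverage while lowering $s$, hence lowering the ratio). Consequently an $\arg\min$ oracle returns a minimum set cover, whose cardinality decides whether a cover of size $\le t$ exists.

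The main obstacle is exactly this two-sided tension in the parameters $\eta$ and $\lambda$: the instance must be tuned so that covering a new element is always worth its penalty (forcing full coverage) yet every superfluous active rule is always strictly harmful (forcing minimality). What makes the reduction go through is that the window $m/(g\lambda) < \eta < m/g$ together with $\lambda > 2 + m/g$ is nonempty and realizable in polynomial size, and the chosen values $\eta = m/(2g)$, $\lambda = 3 + \lceil m/g\rceil$ lie in it. Finally, note that the value gap between cover sizes $s^\ast$ and $s^\ast+1$ is only $O\bigl(\eta/(g\lambda)\bigr)$, so the reduction is not gap-preserving; this matches the remark that it yields no inapproximability and does not preclude a PTAS.
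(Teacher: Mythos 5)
Your proposal is correct in substance and uses the same reduction source as the paper --- Set Cover, with each positive rule bundling one distinguishing ``heavy'' password (which penalizes the top-$k$ numerator) together with blocks of ``light'' passwords for the elements it covers (which inflate the denominator) --- but the execution is genuinely different. The paper sets the guessing parameter $k$ equal to the cover budget, puts $k$ shared filler passwords $\tau_1,\ldots,\tau_k$ into every rule, and solves for a weight parameter $x$ so that the optimal value on a ``yes'' instance is strictly below the optimal value on any ``no'' instance; the oracle is then used via a threshold comparison on the value of the returned policy. You instead take $k=m$ and prove an exact structural characterization: your two monotonicity computations (adding a rule covering a fresh element always lowers the ratio, since $\eta c\lambda \le \eta g\lambda < m\lambda$ and the $s\eta(1+\eta)$ terms cancel; among full covers the ratio is strictly increasing in $s$ iff $\eta g\lambda > m(1+\eta)$, guaranteed by $\lambda > 2+m/g$) show that the minimizers of $p(m,\mathcal{A}_S)$ are exactly the minimum set covers, so the $\arg\min$ oracle hands you a minimum cover outright. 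Both inequalities verify with your parameters $\eta=m/(2g)$, $\lambda=3+\lceil m/g\rceil$. Your route buys a cleaner argument that avoids the paper's delicate choice of $x$; the paper's route keeps $k$ equal to the natural budget and needs no claim about the structure of the optimum, only about its value.

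One boundary case needs a patch: the formula $p(k,\mathcal{A}_S)=(m+s\eta)/\bigl(s(1+\eta)+c\lambda\bigr)$ is valid only when $\mathcal{A}_S$ contains at least $k=m$ passwords, i.e.\ when $s+c\lambda\ge m$; otherwise $p(k,\mathcal{A}_S)=1$ (your expression can exceed $1$ there), and ``adding a rule strictly decreases $p$'' can fail because both values may equal $1$. This is easily repaired: assume WLOG that $\bigcup_i S_i=U$ (otherwise the instance is trivially a ``no''), observe that any full cover has $s+g\lambda\ge 1+3g+m>m$ and hence value strictly below $1$, so no such degenerate policy is ever optimal, and apply your improvement step only to non-degenerate incomplete policies, where both the old and new ratios are genuine values of $p$. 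With that one sentence added, the reduction is sound.
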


The theorem's proof is relegated to the appendix. 
%We sketch the reduction from set cover, and give the key intuition. A formal proof of Theorem \ref{thm:HardnessAddingSubsetsLargeK} can be found in the appendix. Our reduction is from Set Cover, one of the classic NP-Complete problems \cite{karp1972reducibility}.\\
 %
%\noindent {\em Set Cover Instance:} Sets $S_1,\ldots,S_m$, Universe $U = \{1,\ldots,n\}$ and integer $k$.\\
%\noindent {\em Question:} Is there a set cover of size $k-1$?
%
%Now we define $W_1,\ldots,W_n$ to be $n$ disjoint sets of passwords \[W_i = \left\{w_{i,\ell}~\vline~1 \leq \ell \leq n^5m^5 \right\} \ . \] We also define special passwords $t_{j}$ ($j \leq m$) and $\tau_j$ ($j \leq k$)  which are not contained in any $W_i$. \\
%
%We define the following positive password rules: 
%\[ R_i = \{t_i\} \cup  \left\{\tau_{j} ~\vline~ 1\leq j \leq k\right\} + \bigcup_{j: j \in S_i} W_j \ .\]
%
%{\bf Intution: }We assign our probabilities to satisfy two key properties (1) $\Pr[\bigcup_{i \leq m} W_i] \gg \sum_j \Pr[\tau_j] + \sum_{j \leq m} \Pr[t_j]$ so that the optimal solution $S$ must cover $W_1,...,W_n$. (2) $\Pr[t] >  \Pr[\tau] \gg \Pr[w]$ so that the most likely $k$ passwords will always be $t$'s and $\tau$'s. Furthermore, because $\Pr[t] >  \Pr[\tau]$ the optimal solution $S$ will minimize the number of rules used to cover $W_1,...,W_n$.
%
%We show how to assign the probabilities in the appendix. 

\section{Efficient Sampling Algorithms} \label{sec:efficientsampling}

In a sense, our complexity results are not ``realistic'', and in particular in the ranking model our positive algorithmic results assume access to each user's full preferences. Moreover, some algorithms are allowed to run in polynomial time in the number of passwords $N$, which can be huge. In this section we use our complexity results as guidelines in the design of practical sampling algorithms. 

In more detail, we are given oracle access to rules $R_1,...,R_m$ (e.g., we can ask whether or not a password $w \in R_i$) and we are allowed to sample from the distribution induced by the password composition policy $\mathcal{A}_S$ for any $S \subseteq [m]$. Less formally, a sample is equivalent to asking a random user what her favorite password is given the current policy. 

We will work in the more general ranking model, so there is essentially only one positive result we can build on: Theorem~\ref{thm:ConstantKAlg}, a polynomial time algorithm for constant $k$ in the positive rules setting. When adapting this algorithm to the sampling setting, we cannot expect it to work perfectly due to the inherent uncertainty of this domain. Instead we expect the algorithm to find an $\epsilon$-optimal password composition policy with probability at least $1-\delta$, for any given $\epsilon$ and $\delta$. Crucially, the number of samples must not depend on the number of passwords $N$, and must have a polynomial dependence on the other parameters. 

Formally, we let $S^* \subseteq [m]$ denote the optimal collection of positive rules to activate (for all $S \subseteq [m]$, $p\left(1,\mathcal{A}_{S^*}\right) \leq p\left(1,\mathcal{A}_{S}\right)$). Our goal is to find a $(1,\epsilon)$-approximation $S \subseteq [m]$ to $p\left(1,\mathcal{A}_{S^*}\right)$, that is, $S$ such that $p\left(1,\mathcal{A}_{S}\right) \leq p\left(1,\mathcal{A}_{S^*}\right)+\epsilon$, with probability $1-\delta$.

We first present Algorithm \ref{alg:SampleAndEliminate} that achieves our goal for $k=1$; this algorithm is an adaptation of Algorithm~\ref{alg:IterativeElimination}. 

\begin{algorithm}
\caption{SampleAndEliminate}
\begin{algorithmic}
\State {\bf Positive Rules: } $R_1,...,R_m$ 
\State {\bf Input: } $\epsilon$, $\delta$
\State {\bf Initialize: } $S_0 \gets [m], i \gets 0$
\State $s \gets \frac{100}{\epsilon^2} \log \left(\frac{4m}{\epsilon\delta} \right)$
\While{$S_i \neq \emptyset$}
\State {\bf Sample: } Draw samples $w_1,...,w_s$ according to the distribution $\Pr\left[w~\vline~\mathcal{A}_{S_i}\right]$
\State $W \gets \left\{w_1,...,w_s\right\}$
\State $s_w \gets \left| \left\{j~\vline w_j = w \right\} \right|$ for each $w \in W$.
\State $w^* \gets \arg\max \left\{ s_w ~\vline~w \in W \right\}$
\Comment{$w^*$ is the most frequently sampled password}
\State $\hat{p}_i \gets \frac{s_{w^*} }{ s }$ 
\Comment{$\hat{p}_i$ is our estimation of $\Pr\left[w^*~\vline~\mathcal{A}_{S_i} \right]$}
\If{$\hat{p}_i \leq \epsilon/2$} 
\Return $S_i$
\Comment{The current solution is already sufficiently good}
 \Else 
 \State $S_{i+1} \gets S_i - \{j ~\vline ~w^* \in S_j\}$
 \Comment{Deactivate all rules that contain $w^*$}
  \State $i \gets i+1$

 \EndIf
\EndWhile
\Return $S_{i^*}$ where $i^* = \arg\max \left\{ \hat{p}_j \vline j \leq m \right\}.$ 
\end{algorithmic}
\label{alg:SampleAndEliminate}
\end{algorithm}

\begin{theorem}
Algorithm \ref{alg:SampleAndEliminate} runs in polynomial time in $m, 1/\epsilon,1/\delta$, requires $O\left(m \log \left(m/\delta \right)/\epsilon^2 \right)$ samples and returns a $(1,\epsilon)$-approximation $S \subseteq \{1,...,m\}$ of $p\left(1,\mathcal{A}_{S^*}\right)$ with probability at least $1-\delta$. 
\end{theorem}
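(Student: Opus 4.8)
The plan is to adapt the exact algorithm IterativeElimination (Theorem~\ref{thm:AlgorithmAddingSubsets}), replacing every exact probability by its empirical estimate and charging the estimation error to the additive slack $\epsilon$. The running time and sample complexity come for free: each pass of the while-loop deactivates at least one rule, because the sampled password $w^*$ lies in $\mathcal{A}_{S_i}=\bigcup_{j\in S_i}R_j$ and hence in some active rule $R_j$, which is then removed. So the loop runs at most $m$ times; with $s=\tfrac{100}{\epsilon^2}\log(4m/(\epsilon\delta))$ samples per iteration and $O(m)$ bookkeeping, the total is $ms=O(m\log(m/\delta)/\epsilon^2)$ samples and $\poly(m,1/\epsilon,1/\delta)$ time.

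The core of the proof is a single \emph{good event} shown to hold with probability $\ge 1-\delta$. For each of the $\le m$ iterations $i$ I would impose three conditions on that iteration's fresh i.i.d.\ draws from $\Pr[\,\cdot\mid\mathcal{A}_{S_i}]$: (a) the empirical frequency of the truly most likely password $w(S_i)$ is at least $p(1,\mathcal{A}_{S_i})-\epsilon/4$; (b) every ``heavy'' password (true probability $\ge\epsilon/4$) has empirical frequency at most its true probability plus $\epsilon/4$; and (c) no ``light'' password (true probability $<\epsilon/4$) attains empirical frequency above $\epsilon/2$. Since $w^*$ is the most-sampled password, (a) yields $\hat p_i\ge p(1,\mathcal{A}_{S_i})-\epsilon/4$, while (b) and (c) together guarantee that \emph{whenever} $\hat p_i>\epsilon/2$ the chosen $w^*$ must be heavy, so $\Pr[w^*\mid\mathcal{A}_{S_i}]\ge \hat p_i-\epsilon/4$. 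Conditions (a) and (b) follow from Hoeffding's inequality applied to one password and to the at most $4/\epsilon$ heavy passwords, respectively. A union bound over the $\le m$ iterations (legitimate despite the $S_i$ being data-dependent, since each condition is controlled conditionally on the history fixing $S_i$) and substituting $s$ drive the total failure probability to $\delta$.

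Conditioned on the good event, correctness splits into two cases mirroring Theorem~\ref{thm:AlgorithmAddingSubsets}. If the algorithm returns early at $S_i$ with $\hat p_i\le\epsilon/2$, then (a) gives $p(1,\mathcal{A}_{S_i})\le \hat p_i+\epsilon/4\le \tfrac34\epsilon$, which is a $(1,\epsilon)$-approximation because $p(1,\mathcal{A}_{S^*})\ge 0$. Otherwise the loop runs until $S_L=\emptyset$; assuming $S^*\neq\emptyset$ (the all-banned case being degenerate), let $S_i$ be the \emph{last} visited set with $S^*\subseteq S_i$. The step at $i$ breaks the inclusion $S^*\subseteq S_{i+1}$, so the eliminated $w^*$ lies in some rule $R_j$ with $j\in S^*$, i.e.\ $w^*\in\mathcal{A}_{S^*}\subseteq\mathcal{A}_{S_i}$; moreover this was an elimination step, so $\hat p_i>\epsilon/2$. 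Exactly as in the exact proof, optimality together with monotonicity~\eqref{claim:banningNecessaryClaim} gives
\[ p(1,\mathcal{A}_{S^*})\;\ge\;\Pr[w^*\mid\mathcal{A}_{S^*}]\;\ge\;\Pr[w^*\mid\mathcal{A}_{S_i}]\;\ge\;\hat p_i-\tfrac{\epsilon}{4}. \]
Since the algorithm returns the visited set minimizing the estimate (I read the final $\arg\max$ in the pseudocode as selecting the best, i.e.\ smallest-$\hat p$, candidate), $\hat p_{i^*}\le \hat p_i\le p(1,\mathcal{A}_{S^*})+\epsilon/4$, and applying (a) at $i^*$ yields $p(1,\mathcal{A}_{S_{i^*}})\le \hat p_{i^*}+\epsilon/4\le p(1,\mathcal{A}_{S^*})+\epsilon/2$, completing the guarantee.

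The main obstacle is condition (c): a naive union bound over all passwords of true probability below $\epsilon/4$ fails, since there may be up to $N$ of them and $N$ is astronomically large, while each survives its Chernoff bound only with probability $e^{-\Omega(\epsilon s)}$. I would split the light passwords at $\epsilon/8$. The at most $8/\epsilon$ passwords with probability in $[\epsilon/8,\epsilon/4)$ are dispatched by an ordinary union bound. For the possibly numerous passwords with probability $q<\epsilon/8$, the Poisson/Chernoff tail gives $\Pr[\hat q_w>\epsilon/2]\le \rho_w^{\epsilon s/2}$ with $\rho_w=2eq/\epsilon<e/4<1$; factoring out one power, $\rho_w^{\epsilon s/2}\le \rho_w(e/4)^{\epsilon s/2-1}$, and summing while using $\sum_w\rho_w\le 2e/\epsilon$ bounds the failure of (c) by $O(1/\epsilon)\,(e/4)^{\Omega(\epsilon s)}$. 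Thus the global mass constraint $\sum_w\Pr[w\mid\mathcal{A}_{S_i}]=1$ --- which forces only $O(1/\epsilon)$ passwords to sit near the threshold --- turns a would-be $\poly(N)$ union bound into a harmless $\poly(1/\epsilon)$ one, and this is the step that genuinely exploits the structure of the problem.
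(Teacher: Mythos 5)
Your proposal is correct, and its skeleton is the paper's: adapt Algorithm~\ref{alg:IterativeElimination} to samples, prove a per-iteration concentration statement whose failure probability is union-bounded over the at most $m$ (data-dependent) iterations, and then rerun the elimination/monotonicity argument with the additive slack $\epsilon$ absorbing the estimation error. (You also correctly read the final $\arg\max$ in the pseudocode as a typo for $\arg\min$.) Where you genuinely differ is in the key concentration lemma. The paper's Claim~\ref{claim:SamplingBadEvent} is a uniform two-sided statement --- in every iteration, every allowed password's empirical frequency is within $\epsilon/2$ of its conditional probability --- proved by dyadic bucketing: bucket $B_j$ holds passwords of probability about $\epsilon/2^{j}$, hence has at most $2^{j}/\epsilon$ members, while the per-password Chernoff failure decays like $4^{-2^{j-1}}$, so the bucket-by-bucket union bounds sum geometrically. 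Your good event is weaker and asymmetric (a lower tail only for the single true maximizer, an upper tail for passwords of probability at least $\epsilon/4$, and a threshold bound for light passwords), and you control the potentially $N$-many light passwords by factoring the tail $(2eq_w/\epsilon)^{\epsilon s/2}\le \rho_w\,(e/4)^{\epsilon s/2-1}$ and summing $\sum_w \rho_w \le 2e/\epsilon$. Both devices exploit exactly the same structural fact --- conditional probabilities sum to $1$, so few passwords can sit near any probability threshold --- so this is a different implementation of the same idea rather than a different route; if anything, your version sidesteps a small wrinkle in the paper's bucketing (the lower tail for passwords with probability just above $\epsilon/2$ is not actually covered by its bucket analysis, though it is needed for the uniform statement). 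What each buys: the paper's uniform event makes the downstream argument one line (every password in the returned policy has small probability), at the cost of fussier bucket bookkeeping; your refined event gives marginally better constants (final guarantee $p^{*}+\epsilon/2$) and a cleaner tail computation. Finally, you phrase correctness at ``the last visited set containing $S^*$,'' mirroring the exact proof of Theorem~\ref{thm:AlgorithmAddingSubsets}, whereas the paper runs a forward induction on the invariant $\mathcal{A}_{S^*}\subseteq\mathcal{A}_{S_i}$; these two formulations are equivalent.
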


\begin{proof}
Let
\[BAD_i = \left\{\exists w \in \mathcal{A}_{S_i}~\vline~ \left|\frac{s_w}{s} - \Pr\left[w ~\vline~\mathcal{A}_{S_i} \right]\right| \geq \epsilon/2 \right\} \ , \] \ap{should the $\epsilon$ be there on the left hand side?} \jb{nope that was a typo} denote the event that our probability estimates are off during iteration $i$. Claim \ref{claim:SamplingBadEvent} bounds the probability of any bad event. The proof of Claim \ref{claim:SamplingBadEvent} can be found in the appendix. The proof involves bucketing the passwords based on their probability, applying Chernoff Bounds to upper bound the probability of a bad estimate for our passwords in each bucket, and repeatedly applying union bounds. 

\begin{claim}\label{claim:SamplingBadEvent}
$ \Pr\left[\exists i, BAD_i\right] \leq \delta \  . $ 
\end{claim} 

For the rest of the analysis we assume that no bad event occurs. Let $p^* = \min_{S \subseteq [m]} p\left(1,\mathcal{A}_S\right)$ and suppose that $A_{S^*} \subseteq A_{S_i}$. Clearly, this is true when $i=0$.  If $\hat{p}_i \geq \epsilon/2 + p^*$ then $\Pr\left[w^*~\vline~\mathcal{A}_{S^*}\right] \geq \Pr\left[w^*~\vline~\mathcal{A}_{S_i}\right] > p^*$ so that $w^* \notin A_{S^*}$. Hence, $A_{S^*} \subseteq A_{S_{i+1}}$ and the property is maintained for at least one more iteration. If instead $\hat{p}_i < \epsilon/2 + p^*$ then we have $\hat{p}_{i^*} \leq \hat{p}_i \leq p^* + \epsilon/2$ so for each $w \in \mathcal{A}_{S_{i^*}}$ we have $\Pr\left[w ~\vline~\mathcal{A}_{S_{i^*}} \right] \leq p^* + \epsilon$. We conclude that the solution $S_{i^*}$ is a $(1,\epsilon)$-approximation.
\end{proof}

We next explain how to extend Algorithm \ref{alg:ConstantKAlgorithm} to $(1,\epsilon)$-approximate the optimal $p\left(k,\mathcal{A}_S\right)$ for any constant $k$.

\begin{theorem}
There is an algorithm which runs in polynomial time (in $m, 1/\epsilon$, $\delta$), takes a polynomial number of samples, and returns a $(1,\epsilon)$-approximation $S \subseteq [m]$ of $p\left(k,\mathcal{A}_{S^*}\right)$ with probability at least $1-\delta$.
\end{theorem}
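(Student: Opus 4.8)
The plan is to adapt the GuessAndCheck algorithm (Algorithm~\ref{alg:ConstantKAlgorithm}) to the sampling setting in exactly the way that SampleAndEliminate (Algorithm~\ref{alg:SampleAndEliminate}) adapts IterativeElimination, replacing every exact probability $\Pr[w \mid \mathcal{A}_S]$ by an empirical estimate obtained from $s = \Theta(\epsilon^{-2}\log(\cdot/\delta))$ fresh samples drawn from the distribution induced by the current policy. Two difficulties distinguish this from the $k=1$ case: first, we can no longer enumerate the reduced space $\hat{\PasswordSpace}$ in order to guess the ``protected'' set $G$ of the $k$ most popular passwords of the optimum; second, the guessed threshold $p$ must be discretized coarsely (to within roughly $\epsilon/k$) rather than over the $1/n$ grid.

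First I would discretize the threshold, guessing $p$ over $\{0,\epsilon',2\epsilon',\dots,1\}$ with $\epsilon'=\Theta(\epsilon/k)$, giving $O(k/\epsilon)$ values; one of them lies within $\epsilon/(2k)$ of the true probability of the $k$-th most popular password in $\mathcal{A}_{S^*}$. Second, rather than guessing $G$ up front, I would discover it incrementally through a \emph{branching} elimination. Maintain a current rule set $S$ (initially $[m]$) and a protected set $G$ (initially empty, $|G|\le k$). In each round, draw $s$ samples from $\Pr[\cdot \mid \mathcal{A}_S]$ and let $w^*$ be the most frequently sampled password not in $G$. If the empirical frequency of $w^*$ exceeds $p$ by more than the estimation slack, branch on two options: either ban $w^*$ by deactivating all rules containing it (as in Algorithm~\ref{alg:ConstantKAlgorithm}), or---provided $|G|<k$---add $w^*$ to $G$. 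When no password outside $G$ has empirical frequency above threshold, the leaf yields a candidate policy whose value $p(k,\mathcal{A}_S)$ I estimate with further samples, and I finally return the candidate of smallest estimated value.

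The correctness argument mirrors the invariant proof of Theorem~\ref{thm:ConstantKAlg}. Conditioning on the ``good'' event that every empirical estimate the algorithm uses is within $\epsilon/(4k)$ of the truth---which I would establish exactly as in Claim~\ref{claim:SamplingBadEvent}, by bucketing passwords according to their probability, applying a Chernoff bound per bucket, and union-bounding over all (polynomially many) rounds across the whole branching tree---consider the run that guesses $p$ correctly and at every branching point protects precisely the passwords of $G^*$ while banning the rest. The threshold slack guarantees this run never bans a password allowed in $\mathcal{A}_{S^*}$: any such password has true probability at most $p^*\le p+\epsilon/(2k)$, so by monotonicity (Equation~\eqref{claim:banningNecessaryClaim}) its probability stays below the ban threshold. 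Hence the invariant $\mathcal{A}_{S^*}\subseteq\mathcal{A}_S$ is preserved throughout, at termination every password outside $G^*$ has true probability at most $p+\epsilon/(2k)$, and summing the $k$ protected probabilities while again invoking Equation~\eqref{claim:banningNecessaryClaim} yields $p(k,\mathcal{A}_S)\le p(k,\mathcal{A}_{S^*})+\epsilon/2$. Thus this candidate, and therefore the one actually output, is a $(1,\epsilon)$-approximation.

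For the complexity, each elimination path has at most $m$ rounds (each ban deactivates a rule) and we protect at most $k$ times, so the branching tree has $O(m^k)$ leaves; together with the $O(k/\epsilon)$ threshold guesses the total number of sampling rounds is $\poly(m,1/\epsilon)$ for constant $k$. Taking $s=\Theta(\epsilon^{-2}\log(\poly(m,1/\epsilon)/\delta))$ samples per round keeps the sample complexity and running time polynomial in $m$ and $1/\epsilon$ while driving the total failure probability below $\delta$. I expect the main obstacle to be the second difficulty: tuning the threshold slacks and the per-round accuracy $\epsilon/(4k)$ so that simultaneously (a) the correct branch never erroneously bans a member of $G^*$, (b) no password that should be banned survives, and (c) the error accumulated over the $k$ protected passwords and the final value-estimation step stays within $\epsilon$---all while the union bound over the entire (exponential-in-$k$, hence polynomial-for-constant-$k$) branching tree remains affordable.
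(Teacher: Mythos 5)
Your proposal is sound, but it resolves the central difficulty by a different mechanism than the paper. The paper's proof (itself only a sketch) keeps the structure of Algorithm~\ref{alg:ConstantKAlgorithm} intact and adds exactly one idea: since $\mathcal{A}_{\{i\}} \subseteq \mathcal{A}_S$ whenever $i \in S$, monotonicity gives $\Pr[w \mid \mathcal{A}_S] \le \Pr[w \mid \mathcal{A}_{\{i\}}]$, so any password that could have probability at least $\epsilon/k$ under \emph{any} candidate policy must already have probability at least $\epsilon/k$ under one of the $m$ single-rule policies. Sampling each $\mathcal{A}_{\{i\}}$ once therefore yields a small surrogate reduced space $T$ (of size $\mathrm{poly}(m,k,1/\epsilon,1/\delta)$) containing all relevant passwords, after which the guesses $G \subseteq T$, $|G| = k$, together with a discretized threshold, are enumerated exactly as in the exact-information algorithm. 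You never build such a $T$: you discover the protected set lazily through a ban-or-protect branching tree with $O(m^k)$ leaves per threshold guess. Both routes rest on the same two pillars --- the invariant $\mathcal{A}_{S^*} \subseteq \mathcal{A}_S$ along the ``intended'' run, justified via Equation~\eqref{claim:banningNecessaryClaim}, and per-round bucketing/Chernoff estimates as in Claim~\ref{claim:SamplingBadEvent} union-bounded over polynomially many rounds --- and both are polynomial only for constant $k$. What the paper's route buys is a clean two-phase structure (first find the candidate passwords, then enumerate guesses) that reuses the exact-setting algorithm essentially as a black box; what yours buys is that you avoid sampling the $m$ single-rule distributions and avoid having to argue that $|T|$ is small and that $T$ contains all relevant passwords, at the cost of a union bound over a (random) search tree and the slack bookkeeping you flag at the end. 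One accounting note: with the slacks exactly as you state them (grid error $\epsilon/(2k)$, estimation error $\epsilon/(4k)$, and the ban threshold these force) the final bound comes out near $3\epsilon/2$ rather than $\epsilon$; this is immaterial, since scaling all slacks down by a constant factor fixes it --- precisely the routine tuning you anticipate.
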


\begin{proof}[sketch]
To extend Algorithm \ref{alg:ConstantKAlgorithm} to $(1,\epsilon)$-approximate $p\left(k,\mathcal{A}_S\right)$ for constant $k$ we need one more idea. We cannot simply obtain a reduced password space $\hat{P}$ by reducing preference lists because we can only sample from our distribution. Notice that for any $S \subseteq [m]$ such that $i \in S$ we have $\Pr\left[ w ~\vline~ \mathcal{A}_S \right] \leq  \Pr\left[w ~\vline~ \mathcal{A}_{\{i\}}\right]$ so to obtain a $(1,\epsilon)$-approximation it is sufficient to limit our attention to passwords in the following set
\[\hat{P} =  \left\{w~\vline~\exists i, \Pr\left[w~\vline~\mathcal{A}_{\{i\}} \geq \frac{\epsilon}{k} \right] \right\}  \ .\]
We can obtain a superset of $\hat{P}$ by sampling. For each positive rule $R_i$ we draw $s$ independent samples from the distribution $\mathcal{A}_{\{i\}}$ and set
\[T_i = \left\{ w~\vline~\frac{s_w}{s} > \frac{\epsilon}{2k} \right\} \ . \]
Intuitively, a password $w $ is included in $T_i$ if and only if our estimated probability is sufficiently large. Let $T = \bigcup_i T_i$. For a sufficiently large sample size $s=O\left(poly\left(m,k,1/\epsilon,1/\delta \right) \right)$ we can apply Chernoff Bounds to argue that with probability $1-\delta$  (1) $\left|T\right|$ is small, i.e., $O\left(poly\left(m,k,1/\epsilon,1/\delta \right) \right)$, and (2) $T \supset \hat{P}$. \ap{The claim on $|T|$ is unconditional, right?} \jb{correct}
\end{proof}

\section{Experiments} \label{sec:experiments}

To demonstrate how our ideas could apply in a real-world scenario, we simulated runs of Algorithm \ref{alg:SampleAndEliminate} by sampling with replacement from the RockYou leaked password set~\cite{rockYouPasswords}. The set contains over 32 million passwords with a frequency distribution similar to that of many other password sets~\cite{bonneau2012science}. Note that all results presented here are limited by the dataset and assume the normalization model. Working in the normalization model is crucial because we cannot ask the RockYou users for their preferred password under a specific policy; an initial distribution over $\PasswordSpace$ --- which is available to us --- is sufficient though, because it induces a distribution for any policy $\mathcal{A}$. 

We selected 21 positive rules that mirror commonly used password composition rules that are used in practice, and looked at sample sizes $s$ of 100, 500, 1000, 5000, and 10000. The rules included length requirements, character class requirements, combinations of requirements, a dictionary check, etc. (See Appendix~\ref{subsec:experimentsetup} for a complete listing of the rules we selected.) For each run with a particular value of $s$, the algorithm returns a policy $\mathcal{A}_{S}$ for which we can measure $p\left(1,\mathcal{A}_{S}\right)$ in the original dataset and compare with the optimal $p\left(1,\mathcal{A}_{S^*}\right)$, determined from running Algorithm \ref{alg:IterativeElimination} on the original dataset. We performed 500 runs for each of the five values of $s$.

To gain an understanding of how policies based on negative rules perform, we took the complement of the 21 positive rules selected above to get 21 negative rules. We then determined the optimal negative rules policy by calculating $S^* = \arg \min_{S \subseteq [m]} p\left(1,\mathcal{A}_{S}\right)$ via brute-force. This was required because we have no equivalent to Algorithm \ref{alg:IterativeElimination} for negative rules. With this baseline in hand, we designed two na\"{\i}ve algorithms, similar in spirit to Algorithm \ref{alg:SampleAndEliminate}. There are multiple ways to discard a password in the negative rules setting, and one algorithm makes this decision randomly while the other bans the smallest subset as determined from the current sample. Again, 500 runs were performed for each $s \in \{100, 500, 1000, 10000, 50000\}$.

\subsection{Baselines}

\begin{table}[b]
\centering
 \tbl{Baseline probabilities for the RockYou dataset}{
\begin{tabular}{| l | @{\hskip 8pt}l@{\hskip 8pt} | p{2in} |}
\hline
{\textbf{Baseline}} & {{\boldmath$p\left(1,\mathcal{A}_{S}\right)$}} & {{\boldmath$S$}} \\
\hline
Mean across negative rules policies & 1.3\e{-2} & \\
\hline
Mean across positive rules policies & 1.0\e{-2} & \\
\hline
All passwords allowed (no policy) & 9.2\e{-3} &  \\
\hline
One positive rule ($S \in \{1,...,m\}$) & 6.8\e{-4} & 8 chars, 1 upper, 1 digit \\
\hline
Optimal policy with positive rules & 4.4\e{-4} & 14 chars OR 2 symbols OR 8 chars, 1 upper, 1 digit \\
\hline
Optimal policy with negative rules & 1.4\e{-4} & 10 chars AND 2 digits AND 1 symbol AND 1 lowercase AND not in dictionary \\
\hline
\end{tabular}}
\label{tab:Baselines}
\end{table}

\begin{table}[tb]
\centering
 \tbl{Performance of Sampling Algorithms with Positive Rules}{
\begin{tabular}{| l | l | l | l |}
% \cline{2-4}
% \multicolumn{1}{c|}{} & \multicolumn{3}{|c|}{\textbf{Positive Rules}} \\
\hline
{\textbf{Sample Size}} & {mean {\boldmath$p\left(1,\mathcal{A}_{S}\right)$}} & {min {\boldmath$p\left(1,\mathcal{A}_{S}\right)$}} & {\textbf{\% Optimal}} \\
\hline
100 & 6.8\e{-3} & 1.2\e{-3} & \\
\hline
500 & 9.7\e{-4} & {\boldmath$4.4\e{-4}$} & 2\%\\
\hline
1000 & 9.5\e{-4} & {\boldmath$4.4\e{-4}$} & 10\%\\
\hline
5000 & 6.0\e{-4} & {\boldmath$4.4\e{-4}$} & 14\%\\
\hline
10000 & 5.7\e{-4} & {\boldmath$4.4\e{-4}$} & 19\%\\
\hline
\end{tabular}}
 \label{tab:PositiveSamplingPerformance}
\end{table}

\begin{table}[tb]
\centering
 \tbl{Performance of Sampling Algortihms with Negative Rules}{
\begin{tabular}{| l | l | l || l | l |}
\cline{2-5}
\multicolumn{1}{c|}{} & \multicolumn{2}{|c||}{\textbf{Random Decision}} & \multicolumn{2}{|c|}{\textbf{Ban Smallest}} \\
\hline
{\textbf{Sample Size}} & {mean {\boldmath$p\left(1,\mathcal{A}_{S}\right)$}} & {min {\boldmath$p\left(1,\mathcal{A}_{S}\right)$}} & {mean {\boldmath$p\left(1,\mathcal{A}_{S}\right)$}} & {min {\boldmath$p\left(1,\mathcal{A}_{S}\right)$}} \\
\hline
100 & 6.8\e{-3} & 1.2\e{-3} & 7.2\e{-3} & 2.3\e{-3} \\
\hline
500 & 4.4\e{-3} & 6.3\e{-4} & 9.0\e{-3} & 2.3\e{-3}  \\
\hline
1000 & 4.3\e{-3} & 4.5\e{-4} & 8.6\e{-3} & 2.3\e{-3}  \\
\hline
5000 & 6.3\e{-3} & 4.5\e{-4} & 9.2\e{-3} & 9.2\e{-3} \\
\hline
10000 & 7.2\e{-3} & 4.5\e{-4} & 9.2\e{-3} & 9.2\e{-3} \\
\hline
\end{tabular}
}
\label{tab:NegativeSamplingPerformance}
\end{table}

We examined several baselines for comparison with our algorithm. Table~\ref{tab:Baselines} shows these baselines, the probability of the most frequent password in the resulting policy, and the optimal policy as a union or intersection of rules (for clarity, the complement of the union of negative rules is shown as the intersection of positive rules).

As shown in Table~\ref{tab:Baselines} from the means across policies, randomly selecting a policy from the power set of rules can be worse than having no policy. \sk{Evidence for the need for a principled approach --- maybe mention in discussion?} The ``one rule maximum'' baseline was selected because, if decided based on sampling, only $m$ distributions need be sampled. Our efficient algorithm requires the same amount of sampling, but can find the optimal policy over $S \subseteq [m]$ rather than $S \in \{1,...,m\}$. Also of interest is the optimal policy with negative rules, which is over 3x better than the optimal policy with positive rules. However, as shown in the following section, the performance of our sampling algorithms with negative rules was far worse than in the positive rules setting.

\subsection{Performance}

In the positive rules setting (see Table~\ref{tab:PositiveSamplingPerformance}), the algorithm performed extremely well even at moderate sample sizes. The average policy selected with $s = 500$ was almost 10x better than having no policy. At $s = 1000$, the optimal policy was found 10\% of the time (50 out of 500 times).

In the negative rules setting (see Table~\ref{tab:NegativeSamplingPerformance}), however, neither algorithm found the optimal policy. The ``Ban Smallest'' heuristic, when faced with a choice between multiple subsets that contain the most likely password, decides to ban the smallest available subset, disrupting the space the least. This might seem like an intuitively good choice but, in fact, it fails to find a better policy than the empty set at large sample sizes. The randomized algorithm does better (it cannot actually do worse) but still has much worse average case performance than using our efficient algorithm with positive rules.

\section{Discussion}
\label{sec:disc}

We conclude by discussing some key points. 

\medskip
\noindent\textbf{Where do the rules comes from?} Throughout the paper we have assumed that the rules (whether positive or negative) are given as part of the input; it is not up to us to find these rules. Our experiments indicate that a collection of intuitive and practical rules can already give very good results on real data. However, the question of deciding which rules should be added to our collection is outside the scope of this paper. Much like the problem of feature selection, it is an interesting problem with real-life implications, which we suspect will be very difficult in practice.

\smallskip
\noindent{\bf Alternate policy goals.} Our goal \cite{boztas1999entropies} has been to minimize $p\left(k, \mathcal{A}_S \right)$. Intuitively, $p\left(k, \mathcal{A}_S \right)$ represents the probability that an adversary with no background knowledge can successfully guess the password of a randomly selected user in $k$ tries. A small value of $k$ optimizes security guarantees against an online guessing attack in which the adversary is locked out after $k$ failed attempts to login. A much larger value of $k$ (e.g., $2^{32}$) is necessary to optimize security against an adversary who has obtained the cryptographic hash of a password and is able to mount a brute-force dictionary attack \cite{seeley1989password}. However, the optimal solutions for $p\left(1, \mathcal{A}_S \right)$ and $p\left(2^{32}, \mathcal{A}_S \right)$ might be completely different. One stronger goal that we might hope to achieve is to optimize both goals simultaneously. More formally, can we find a policy $S \subseteq[m]$ such that for every $S' \subseteq[m]$ and every $k \leq N$ we have $p\left(k,\mathcal{A}_S\right) \leq c\cdot p \left(k,\mathcal{A}_{S'}\right)$ for some constant $c$? Unfortunately, the answer is no. For any constant $c$ this universal approximation goal is impossible to satisfy in the ranking model (see Theorem \ref{thm:ImpossibilityOfConstantFactorUniversalAppx}). 
 
 Other natural goals include $\alpha$-work factor \cite{pliam2000incomparability} and a refinement called $\alpha$-guesswork \cite{bonneau2012science} (e.g., maximize the total number of guesses needed to compromise $\alpha$-fraction of the accounts). While $\alpha$-guesswork is an useful metric to analyze the security of 70 million Yahoo passwords \cite{bonneau2012science}, it may not be a desirable optimization goal for the organization because it might allow the adversary to crack up to $\alpha-\epsilon$-fraction of the accounts with relatively few guesses. 
 
Another interesting direction is to account for an adversary with basic background information about the user (e.g., e-mail address, username, birthday). It may not always be realistic to assume that the adversary has no background knowledge because the adversary can often easily obtain some background knowledge about a user by searching for publicly available information on the internet. One approach might be to design a rule $R$ to specify different passwords for different users (e.g., the set of passwords that contain the username or birthday of the user). \\

\smallskip
\noindent {\bf Open Questions.} 
While we were able to prove several hardness results about finding the optimal password composition policy in the negative rules setting, it is possible that these hardness results could be circumvented by making mild (hopefully realistic) assumptions about the underlying password distribution or the rules $R_1,...,R_m$. Are there efficient algorithms to optimize $p\left(k,\mathcal{A}_S\right)$ in the negative rules setting given realistic assumptions? It is also possible that mild realistic assumptions could be used to circumvent the impossibility result of Theorem \ref{thm:ImpossibilityOfConstantFactorUniversalAppx}, and design a universal approximation algorithm. 

There are also several interesting technical questions that remain open: 
\begin{enumerate}
\item Normalization model with negative rules: Can we efficiently $c$-approximate $p\left(1,\mathcal{A}_{S^*}\right)$ for any constant $c$? Is there a sub-exponential algorithm (in $m$) to compute $p\left(1,\mathcal{A}_{S^*}\right)$? 
\item Ranking model with positive rules: Can we efficiently $c$-approximate $p\left(k,\mathcal{A}_{S^*}\right)$ for some constant $c$ when $k$
is a parameter?
\end{enumerate}

\smallskip
\noindent\textbf{The future.} There is a real need for a principled approach to optimizing password composition policies. We have taken a first step in this direction by providing an intuitive theoretical model and showing that it leads to algorithms that perform well on real data. We can only hope that our work will spark a fundamentally new interaction between theory and practice in passwords research.

\bibliography{passwords}

\newpage

\appendix
\section{Missing Proofs}

\begin{remindertheorem}{ \ref{thm:UGCHardness}}
\thmUGCHardness
\end{remindertheorem}

\medskip

\begin{proofof}{Theorem \ref{thm:UGCHardness}} 
We begin with a construction of a bounded degree graph which is hard approximate up to a (say) $1.5$-factor. As shown in~\cite{AustrinKS11}, for every constant $d$ there exists a family of $d$-regular graphs for which it is $UGC$-hard to determine whether there exists a vertex cover of size $t$, or all vertex-covers have size at least $\left(2-O(\log\log(d)/\log(d))-\epsilon\right) t$. Fixing $d$ to be a large enough constant such that this factor is $>1.5$, we now reduce this family of instances to a password problem using the exact same construction as in the proof of Theorem~\ref{thm:hardnessKisParameter}, with the exception that we set $k = g+e - (1.5-\epsilon) t$.

Observe, for this family of instances, $e = O(g)$ so $|\PasswordSpace| = O(g)$, but also the size of the optimal vertex-cover has to be $\Theta(g)$ (at most $g$ and at least $g/d)$. Furthermore, each password appears at the top of at most $d$ preference-lists. Therefore, by allowing $\mathcal{A}$ and banning $\mathcal{B}=\PasswordSpace\setminus \mathcal{A}$, we not only have a distribution whose support is of size $|L_\mathcal{B}|$, but it also holds that the probability of each word in $L_\mathcal{B}$ is $\Omega(1/|L_{\mathcal{B}}|)$.

Therefore, if the graph has a vertex-cover $C$ of size $t$, then by banning all words $\mathcal{B} = \{ w_u ~:~ u\in C\}$ we have that the $n$ preference-lists induce a distribution over $|L_\mathcal{B}| \geq g+e-t$. Since we set $k = g + e - (1.5-\epsilon)t$ we have that the set of most uncommon passwords contain at least $(0.5 -\epsilon)t = \Omega(|L_{\mathcal{B}}|)$ words, each with $\Omega(1/|L_{\mathcal{B}}|)$ probability, thus $p(k,\mathcal{A}) = 1- \Omega(1)$. (And, in particular, for the optimal policy $\mathcal{A}^*$ we have $p(k,\mathcal{A}^*) = 1 - \Omega(1)$.)

In contrast, applying the same argument from the proof of Theorem~\ref{thm:hardnessKisParameter}, we have that if $G$ has all vertex-covers of size $>(1.5-\epsilon)t$ then $p(k,\mathcal{A}) = 1$. The $O(1)$-hardness of approximation follows.
\end{proofof}

\begin{remindertheorem}{\ref{thm:HardnessOfApproximationOptimalCombinationOfSubsets}}
Let $\epsilon > 0$. Unless $P=NP$ there is no polynomial time algorithm (in $N,n,m$) that approximates $\min_{S\subseteq [m]} p(1,\mathcal{A}_S)$ to a factor of $n^{1/3-\epsilon}$ in the negative rules setting and the rankings model. \ap{See below, possibly change bound to $\Omega(n^{1/3})$}
\end{remindertheorem}

\medskip

\begin{proofof}{Theorem~\ref{thm:HardnessOfApproximationOptimalCombinationOfSubsets}}
%\begin{theorem} \cite{feige1996interactive} For all constant $\epsilon >0$ there is no polynomial-time factor-$\frac{1}{g^{1-\epsilon}}$ approximation algorithm for Maximum-Independent-Set unless $NP \subseteq BPP$.
%\end{theorem}
Fix $\epsilon>0$. Our reduction is from the Max-Independent-Set problem, which is known to be hard to approximate up to a factor of $n^{1-\epsilon}$~\cite{Hastad96}. 
We are given a graph $G$ with $g$ vertices and $e$ edges, and we must determine whether the size of $G$'s largest independent set is $g^{1-\epsilon}$ or $g^{\epsilon}$.  

Given a Max-Independent-Set instance, we denote $K = g^{\epsilon}$ and create the following password policy instance, which is composed out of the following set of possible words:
\begin{eqnarray*}
\PasswordSpace &=& \{A_1,...,A_K\} \cup \{B_1,...,B_g\}  \\
&&\cup \left( \bigcup_{\{u,v\} \in E(G)} \left(\{C_{u,1}^v,...,C_{u,g}^v\} \cup \{C_{v,1}^u,...,C_{v,g}^u \}  \right)\right) \\
&&\cup  \left( \bigcup_{v \in V(G), 1\leq i < j \leq K} (\{D_{v,i,j,1},...,D_{v,i,j,g}\} \cup  \{D_{v,j,i,1},...,D_{v,j,i,g}\}) \right) \cup \{X\}
\end{eqnarray*} 

We now describe the $n = g + ge + g^2 \binom K 2 \leq g^3 + g^{2+2\epsilon}$ users' preference-lists. We start with the $g$ rankings specified in Table~\ref{fig:type1}. We continue with $ge$ more rankings, where for each edge $(u,v) \in E(G)$ we add $g$ more rankings, as detailed in Table~\ref{fig:type2}. Lastly, we add $g^2\binom{K} {2}$ more rankings, where for each triple $(v,i,j)$ where $v$ is a vertex of $G$ and $i\neq j \in [K]$ we add $g$ rankings, as detailed in Table~\ref{fig:type3}. (Observe, the tables detail the first few words in each list, then end with ``$\ldots$'' mark, which indicates that from that point on the remaining words may appear in any order.)

\begin{table}
\caption{Rankings used in the proof of Theorem~\ref{thm:HardnessOfApproximationOptimalCombinationOfSubsets}.}
\centering
\subfigure[First type.]{
\label{fig:type1}
\begin{tabular}{| c | c | c |}
\hline
$\ell_1$ & \ldots & $\ell_g$ \\
\hline
\hline
$A_1$ & $\ldots $& $A_1$ \\
\hline
$A_2$ & $\ldots $& $A_2$ \\
\hline 
\multicolumn{3}{|c|}{$\ldots$} \\
\hline
$A_K$ & $\ldots $& $A_K$ \\
\hline 
$B_1$ & $\ldots$ & $B_g$ \\
\hline
\multicolumn{3}{|c|}{$\ldots$} \\
\hline 
\end{tabular}
}
\quad
\subfigure[Second type.]{
\label{fig:type2}
\begin{tabular}{ | c | c | c |}
\hline
$\ell_{u,v,1}$ & \ldots & $\ell_{u,v,g}$ \\
\hline
\hline
$C_{u,1}^v$ & \ldots & $ C_{u,g}^v$ \\
\hline
$C_{v,1}^u$ & $\ldots $& $C_{v,g}^u$ \\
\hline
$X$ & $\ldots $& $X$ \\
\hline 
\multicolumn{3}{|c|}{$\ldots$} \\
\hline
\end{tabular}
}
\quad
\subfigure[Third type.]{
\label{fig:type3}
\begin{tabular}{ | c | c | c |}
\hline
$\ell_{v,i,j,1}$ & \ldots & $\ell_{v,i,j,g}$ \\
\hline
\hline
$D_{v,i,j,1} $ & \ldots & $ D_{v,i,j,g}$ \\
\hline
$D_{v,j,i,1} $ & $\ldots $& $ D_{v,j,i,g}$ \\
\hline
$X$ & $\ldots $& $X$ \\
\hline 
\multicolumn{3}{|c|}{$\ldots$} \\
\hline
\end{tabular}
}
\end{table}

Finally, we detail our rules. For every $i\in [K]$ and $u\in V(G)$ we have a rule which roughly corresponds to deciding that $u$ is a member of the independent set:
\[R_{u,i} = \{A_i\} \cup \bigcup_{\{v : ~ (u,v) \in E(G)\}} \{ C_{u,1}^v, C_{u,2}^v, \ldots, C_{u,g}^v\} \cup \bigcup_{j\in [K], j\neq i}\{D_{u,i,j,1}, \ldots, D_{u,i,j,g}\} \ .\]

Our analysis now follows from a series of observations.

\medskip

\noindent{\em Observation 1:} If we do not ban all of the passwords $A_1,..., A_K$ then $p_1 \geq g/n$.  Therefore, for every $i$, we must choose at least one of the rules $\{R_{u,i} \}$ to activate, or else we have that $p_1 \geq g/n$

\medskip

\noindent{\em Observation 2:} If we ban $C_{u,1}^v, \ldots, C_{u,g}^v$ and $C_{v,1}^u, \ldots, C_{v,g}^u$ then we must have $p_1 \geq g/n$. Therefore, for any $i\neq j$ it must not be the case that we ban $R_{u,i}$ and $R_{v,j}$ where $(u,v)\in E(G)$, or else we have that $p_1 \geq g/n$.

\medskip

\noindent{\em Observation 3:} If we ban $D_{v,i,j,1}, \ldots, D_{v,i,j,g}$, and $D_{v,j,i,1} , \ldots, D_{v,j,i,g}$ then $p_1\geq g/n$. Therefore, for any $i\neq j$ it must not be the case that we ban $R_{u,i}$ and $R_{u,j}$, or else we have that $p_1 \geq g/n$.

\medskip 

These observations lead us to the following conclusion. If $G$ contains an independent set $v_1,...,v_K$ of size $K$, then activating the rules $\{R_{v_1,1}, R_{v_2,2}, \ldots, R_{v_K,K}\}$ leads to a setting where each truncated ranking begins with a unique word, so $p_1 = 1/n$. In contrast, if $G$ does not have an independent set of size $K$, then $p_1 = g/n$. 
%We conclude that it is $NP$-hard to distinguish between $p_1 = O(g/n)$ and $p_1 = O(1/n)$, and 
Since $n = O(g^3)$ we have an $\Omega(n^{1/3})$-hardness of approximation. \ap{Are we actually reducing from gap-IS here and not from the regular IS? It seems to me that $K$ could simply be the size of the independent set in a regular IS instance.} \os{The reduction works for any Max-Ind-Set instance. The approximation hardness follows from graphs where it hard to determine the size of the max-ind-set up to a factor of $\sqrt g$ -- whether $K = o(g^{1/2})$ or $K\geq g^{1-\epsilon}$. Observe, you need $K < g^{1/2}$ is you want to get $n^{1/3}$-hardness, otherwise $g^2K^2$ is the dominating factor for $n$ (also, in these hard instances $e = O(g^2)$.} \ap{Also, why do we need the $\epsilon$ in the inapproximability statement?}\os{Just a standard for apx-hardness -- it takes care of the $\Omega$-notation...} Observe also that the number of total words is $N = K+g+2eg + g^2K(K-1) +1= O(g^3) = O(n)$ so it is also hard to approximate the problem to a factor of $\Omega(N^{1/3})$.
\end{proofof}

\begin{remindertheorem}{\ref{thm:NrmPrbEfficientAlgAnyK}}
\thmNrmPrbEfficientAlgAnyK
\end{remindertheorem}

\medskip

\begin{proofof}{Theorem \ref{thm:NrmPrbEfficientAlgAnyK}}
Let $\mathcal{A}^*$ denote the optimal solution, denote its most $k$ popular passwords as $w_{i_1}, \ldots, w_{i_k}$, and denote also $P^*$ as the total probability mass of the words in $\mathcal{A}^*$ according to the initial distribution: $P^* = \sum_{w\in \mathcal A^*} \Pr[w]$. Therefore, $p(k,\mathcal{A}^*) = \sum_{j=1}^k \Pr[w_{i_j}] / P^*$. 

Clearly, all words $w_j$ s.t. $j > i_k$ belong to $\mathcal{A^*}$ -- otherwise, we could add such a word and decrease the probability of the top $k$ words. Similarly, all words $w_j$ s.t $j < i_1$ must not belong to $\mathcal A^*$, otherwise they would belong to the set of most popular $k$ words. We now claim that $w_{i_1}, \ldots, w_{i_k}$ are $k$ consecutive words.

Suppose that there was some word $w'$ between some $w_{i_j}$ and $w_{i_{j+1}}$. Then $\mathcal{A}^*$ clearly banned it, otherwise it would be one of the most popular $k$ words. We claim that the policy $\mathcal{A}'$ where we ban $w_{i_1}$ and allow $w'$ instead satisfies $p(k,\mathcal{A}') \leq p(k,\mathcal{A}^*)$.

We denote $p_1 = \Pr[w_{i_1}]$, $q = \sum_{j=2}^k \Pr[w_{i_j}]$ and $p' = \Pr[w']$, and we know $p_1 \geq p'$. Then $p(k,\mathcal{A}^*) = (p_1 + q) / P^*$, whereas 
\[ p(k,\mathcal{A}') =  \frac {p' + q}  {P^* - p_1 + p'} \ .\]
Our goal is to show $p(k,\mathcal{A}') \leq p(k,\mathcal{A}^*)$, which holds iff
\[ (p'+q) P^* \leq (p_1+q)(P^*- (p_1 - p'))  \]
By some algebraic manipulations, this holds iff \[  (p_1 - p') P^*  \geq (p_1 - p') (p_1 + q)  \] which clearly holds because $p_1 - p'$ is a non-negative quantity, and $p_1 + q = \sum_{j=1}^k \Pr[w_{i_j}] \leq \sum_{w\in \mathcal{A}^*} \Pr[w]$.

As for the running time of the algorithm, it is obvious that sorting requires $O(N\log N)$ time. Finding the minimum requires only $O(N)$ time: if we denote $a_ i =  \sum_{i \leq j \leq i+k} \Pr[w_j]$ and $b_i = \sum_{i \leq j} \Pr[w_j]$, then based on $a_i$ and $b_i$ it is easy to compute $a_{i+1}$ and $b_{i+1}$ in $O(1)$ time.
\end{proofof}

\begin{reminderclaim}{\ref{clm:FamilyOfSetsWithLargeUnions}}
Fix $m$ and $s$ such that $m\geq s$. There exists a domain $D$ of size $\Theta(s^2\log(m))$ and a family of $m$ sets, $F_1, F_2, \ldots, F_m\subseteq D$, such that each set in the family contains $\tfrac {|D|}{2s}$ elements, and for every $C\subseteq [m]$ of size $|C| \leq s$, we have that the size of the union $\left| \bigcup_{i\in C} F_i \right| \geq \tfrac {|D|}{2s} \tfrac{|C|} 4$. This domain can be constructed in randomized $\poly(s,m)$ time.
\end{reminderclaim}

\medskip

\begin{proofof}{Claim \ref{clm:FamilyOfSetsWithLargeUnions}} Given $m$ and $s$, we first pick a random function $\phi:[m] \to [2s]$. Fixing a subset $C\subseteq [m]$ of size $|C| \leq s$, we claim that $|\phi(C)| > |C|/2$ w.p. at least $1-(0.825)^{|C|}$. Indeed,
\begin{eqnarray*}
& \Pr\left[ |\phi(C)| \leq |C|/2 \right] & \leq \Pr\left[ \exists T\subseteq [2s] \textrm{ s.t. } |T| = |C|/2 \textrm{ and } \forall i \in C, \phi(i)\in T\right] \cr
&& \leq \binom{2s} {|C|/2} \Pr\left[\forall i \in C, \phi(i)\in T\right] \leq \left(\frac {4se}{|C|}\right)^{|C|/2} \left(\frac {|C|/2} {2s}\right)^{|C|} \cr
&& = e^{|C|/2} \left(\frac {|C|} {4s}\right)^{|C|/2} = \left(\sqrt{e/4}\right)^{|C|} < (0.825)^{|C|} \ .
\end{eqnarray*}
So assuming $|C| \geq 8$ we have that $C$ is mapped to at least $|C|/2$ distinct images by $\phi$ w.p.$>3/4$. Also,  if $|C|\leq 7$ then probability of even two elements getting mapped to the same image is at most $\binom{7} {2} \tfrac 1 {2s} < 0.25$ for $s> 42$.

We now construct $D$ by taking $d$ independently chosen such $\phi$-mappings, which we denote as $\phi_1, \phi_2, \ldots, \phi_d$, and so $D = [2s] \times [d]$. We construct the family $F_i = \{(\phi_1(i),1), (\phi_2(i),2), \ldots, (\phi_d(i),d)  \}$ for every $i \in [m]$. Clearly, for every $i$ it holds that $|F_i| = d = |D|/2s$. Supposed for the sake of contradiction that there exists some $C\subseteq [m]$ of size $\leq s$ such that $\left| \bigcup_{i\in C} F_i \right| \leq \tfrac {|C|} 4 |F_i|$. By construction, we have that 
\[\left| \bigcup_{i\in C} F_i \right| = \sum_{j=1}^d \left| \{ (\phi_j(C), j)  \}  \right| = \sum_{j=1}^d |\phi_j(C)|\]
so by the Markov inequality we have that at least $d/2$ functions where the cardinality of the image of $C$ is less than $|C|/2$. Let $X_{C,j}$ be the indicator random variable of $\phi_j$ mapping the set $C$ to no more than $|C|/2$ distinct elements, the Hoeffding bound gives that
\begin{eqnarray*}
& \Pr\left[ \exists C  \textrm{ of size $\leq s$ s.t. } \sum_j X_{C,j} > d/2 \right] &\leq \sum_{s'<s} \binom{m} {s'} \Pr[ \tfrac 1 d \sum_j X_{C,j} > 0.5] \leq m^{O(s)} e^{-d/10} 
\end{eqnarray*}
Setting $d = \Theta(s\log m)$ gives that w.p. $\geq 1/2$ no such $C$ exists.
\end{proofof}

\begin{remindertheorem}{\ref{thm:HardnessAddingSubsetsLargeK}}
\thmHardnessAddingSubsetsLargeK
\end{remindertheorem}

\medskip

\begin{proofof}{Theorem \ref{thm:HardnessAddingSubsetsLargeK}} 
Our reduction is from set cover. 

\noindent {\em Set Cover Instance:} Sets $S_1,\ldots,S_m$, Universe $U = \{1,\ldots,n\}$ and integer $k$.\\
\noindent {\em Question:} Is there a set cover of size $k-1$?

Now we define $W_1,\ldots,W_n$ to be $n$ disjoint sets of passwords \[W_i = \left\{w_{i,\ell}~\vline~1 \leq \ell \leq n^5m^5 \right\} \ . \] We also define special passwords $t_{j}$ ($j \leq m$) and $\tau_j$ ($j \leq k$)  which are not contained in any $W_i$. \\

We define the following positive password rules: 
\[ R_i = \{t_i\} \cup  \left\{\tau_{j} ~\vline~ 1\leq j \leq k\right\} + \bigcup_{j: j \in S_i} W_j \ .\]

We assign probabilities as follows:

\[ \Pr\left[w_{i,\ell} \right]  = \left(1-\frac{1}{n^3} \right) \frac{1}{m^5n^6} \ , \]
for each $i \leq n$ and $\ell \leq m^5n^5$. Observe that
\[ \Pr\left[\bigcup_{i \leq m} W_i \right]  = \left(1-\frac{1}{n^3} \right) \ , \] 
so that almost all of the probability mass is concentrated inside the sets $W_i$ and the probability mass is uniformly distributed. We also set
\[\Pr\left[\tau_j \right] = \frac{1-x}{n^3k} \ ,\]
and 
\[\Pr\left[t_j \right] = \frac{x}{n^3m} \ , \]
where $0 \leq x \leq 1$ will be defined later. First notice that 
\[\sum_{j \leq k} \tau_j + \sum_{j \leq m} t_j = k\left(\frac{1-x}{n^3k}\right) + m\left(\frac{x}{n^3m}\right) = \frac{1}{n^3}  \ ,\]
so our probability distribution is well defined. 
Suppose that there is a set cover $C \subseteq [m]$ s.t. $\left|C \right| \leq k-1 \wedge \bigcup_{i \in C} S_i = U$, and consider the solution $\mathcal{A}_C$. We cover all $W_i$'s and use at most $k-1$ $t$'s. Hence, 
\[p\left(k,\mathcal{A}_C\right) \leq \left(\left(k-1\right)\Pr[t] + \Pr[\tau]\right)\left(\frac{n^3}{n^3-1} \right) \ .\]
Suppose that there is no set cover of size $k$. For every set of $k$ or more rules $S$ we have at least $k$ $t$'s in our solution so 
\[p\left(k,\mathcal{A}_S\right) \geq k\Pr[t]  \ . \]
For every set of rules $S$ that does not cover all the $W_i$'s we have at most $ \left(1-\frac{1}{n}\right)\left(1-\frac{1}{n^3} \right) $-fraction of the total probability mass so
\[p\left(k,\mathcal{A}_S\right) \geq \frac{\left((k-1)\Pr[\tau]+\Pr[t]\right)}{\left(1-\frac{1}{n}\right)\left(1-\frac{1}{n^3} 
\right) } \ . \] 

It suffices to select $x$ s.t. 
\[\left(\left(k-1\right)\Pr[t] + \Pr[\tau]\right)\left(\frac{n^3}{n^3-1} \right) < \min\left\{\frac{\left((k-1)\Pr[\tau]+\Pr[t]\right)}{\left(1-\frac{1}{n}\right)\left(1-\frac{1}{n^3} 
\right) }  , k\Pr[t] \right\} \ , \]
or ---after some algebraic manipulation --- equivalently,
\[ a= \frac{\left(\frac{n^3}{n^3-1} \right)}{\left(1-\frac{1}{n^3-1}\right)}\Pr[\tau] <  \Pr[t] < b =\Pr[\tau] \frac{(k-2)+\frac{1}{n-1}}{(k-2)-\frac{1}{n-1}} \ . \]
Observe that $a \leq \Pr[\tau] \leq b$ so it suffices to set $x$ s.t. $\Pr[t] = \frac{a+b}{2}$. We can solve for x to get
\[x = \frac{m\left(-3+2n+2n^3-2n^4+k\left(n-1\right)^2\left(1+n+n^2\right)\right)}{m\left(-3+2n+2n^3-2n^4\right) + k^2\left(2-2n-n^3+n^4 \right) + k\left(-2+4n+n^3-2n^4+m\left(n-1\right)^2\left(1+n+n^2\right) \right)}  \ .\]
\end{proofof}

\begin{reminderclaim}{\ref{claim:SamplingBadEvent}}
$ \Pr\left[\exists i, BAD_i\right] \leq \delta \  . $ 
\end{reminderclaim} 

\medskip

\begin{proofof}{Claim \ref{claim:SamplingBadEvent}} 
By the union bound it suffices to show that 
\[ \Pr\left[BAD_i\right] \leq \frac{\delta}{m} \ . \]
Our first step is to divide the passwords $w \in \PasswordSpace$ into buckets $B_j$ based on their probability. For $j > 0$ we define
  \[ B_j = \left\{ w~\vline \frac{\epsilon}{2^{j}} \leq \Pr\left[w ~\vline ~\mathcal{A}_{S_i} \right] \leq \frac{\epsilon}{2^{j-1}} \right\} \ , \]
and for $j=0$ we set \[ B_0 = \left\{ w~\vline ~\epsilon \leq \Pr\left[w ~\vline ~\mathcal{A}_{S_i} \right]\right\}  \ . \]
Observe that 
\[ \PasswordSpace = \bigcup_{j=0}^\infty B_j \ . \]

Let $w \in B_j$ be given ($j > 0$) then by the Chernoff Bounds:
\[\Pr\left[s_w > s \Pr\left[w ~\vline ~\mathcal{A}_{S_i} \right] + s\epsilon/2  \right] \leq \exp\left(-2^{j-1} \log \left(\frac{4m}{\delta \epsilon} \right) \right) \leq \frac{4^{-2^{j-1}} \delta \epsilon}{m} \ . \]
Notice that the bucket $B_j$ contains at most $\left|B_j\right| = 2^{j}/\epsilon$ passwords. 
\[\Pr\left[\exists w \in B_j, s_w > s \Pr\left[w ~\vline ~\mathcal{A}_{S_i} \right] + s\epsilon/2  \right]  \leq \frac{4^{-2^{j-1}} \delta \epsilon\left|B_j\right|}{m} \leq  \frac{ \delta }{2^{j+1}m} \ . \]
Now if we union bound across all $j>0$ we get  
 \[\Pr\left[\exists w \in \bigcup_{j=1}^\infty B_j, s_w > s \Pr\left[w ~\vline ~\mathcal{A}_{S_i} \right] + s\epsilon/2  \right] \leq \sum_{j=1}^\infty \frac{ \delta }{2^{j+1}m} = \frac{\delta}{2m} \ . \]
Finally, we consider the passwords in $B_0$. By Chernoff Bounds for each $w \in B_0$ we have
\[\Pr\left[\left| s_w - s \Pr\left[w ~\vline ~\mathcal{A}_{S_i} \right] \right| > s\epsilon/2  \right] \leq \frac{\delta \epsilon}{2m} \ , \]
by applying the union bound $\left|B_0\right| \leq 1/\epsilon$ we get
\[\Pr\left[\exists w \in B_0 \left| s_w - s \Pr\left[w ~\vline ~\mathcal{A}_{S_i} \right] \right| > s\epsilon/2  \right] \leq \frac{\delta }{2m} \ . \]
Combining our inequalities we obtain the desired result: 
\[\Pr\left[BAD_i \right] \leq \Pr\left[\exists w \in \bigcup_{j=0}^\infty B_j, s_w > s \Pr\left[w ~\vline ~\mathcal{A}_{S_i} \right] + s\epsilon/2  \right] \leq \frac{\delta}{m} \ . \]
\end{proofof}

\section{Impossibility of constant-factor universal approximation} \label{subsec:ImposibilityOfConstantFactorUniversalAppx}
In this section we consider the following goal: given a constant $c$ find a password composition policy $\mathcal{A}$ such that 
\[ p\left(k,\mathcal{A} \right) \leq c \cdot p\left(k,\mathcal{A}'\right) \  , \]
for any other policy $\mathcal{A} '$ and {\em every} value of $k \leq N$. Such a policy --- if it exists --- would provide a nearly optimal defense against both online attacks and dictionary attacks simultaneously \cite{seeley1989password}. Unfortunately, Theorem \ref{thm:ImpossibilityOfConstantFactorUniversalAppx} rules out the possibility of a constant universal approximation in the rankings model. Our impossibility result holds even in the singleton rules setting. We show that it is possible to construct a distribution $\mathcal{D}$ over rankings for which no universal approximation exists.

We construct our distribution $\mathcal{D}$ (algorithm \ref{alg:DistributionImpossible}) over rankings by merging two distributions $\mathcal{D}_1$ and $\mathcal{D}_2$ over preference lists. 

{\em Intuition:} Passwords sampled from $D_2$ are highly secure, but passwords sampled from $D_1$ are highly insecure. To make improve the security of $D_1$ it is necessary to ban all passwords in $W$, but this reduces the security of $D_2$ significantly. 

We make two claims (1) We must ban {\em all} but a small subset of passwords if we want to even {\em approximately} optimize  $p\left(1,\mathcal{A} \right)$. (2) We must keep a larger subset of passwords to even {\em approximately} optimize $p\left(k,\mathcal{A}\right)$ for large values of $k$.

\begin{theorem} \label{thm:ImpossibilityOfConstantFactorUniversalAppx}
For all constants $c>0$ there exists distribution $\mathcal{D}$ over rankings such that $\forall \mathcal{A} \subseteq \PasswordSpace, \exists \mathcal{A}', k \in \mathbb{N},$ such that 
 \[ p\left(k,\mathcal{A} \right) > c \cdot p\left(k,\mathcal{A}'\right) \ . \]
\end{theorem}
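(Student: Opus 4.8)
The plan is to realize the intuition stated before the theorem by building the distribution $\mathcal{D}$ over rankings as a mixture of two populations sharing one large set $W=\{w_1,\dots,w_T\}$ of ``pivot'' passwords, together with two small auxiliary sets $A=\{a_1,\dots,a_S\}$, $B=\{b_1,\dots,b_S\}$ and a handful of ``common tail'' passwords, all mutually disjoint. The insecure population $\mathcal{D}_1$ (total mass $\gamma$) consists of $S$ types, where type $j$ has ranking $\langle w_1,\dots,w_T,a_j,y_1,y_2,\dots\rangle$: every $\mathcal{D}_1$ user shares the \emph{same} prefix over all of $W$, then diverges to a private $a_j$, then follows a common tail $y_1,y_2,\dots$. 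The secure population $\mathcal{D}_2$ (total mass $1-\gamma$) consists of $T$ types, where type $i$ ranks $\langle w_i,b_{(i\bmod S)},y'_1,y'_2,\dots\rangle$, so these users are spread uniformly over $W$ but collapse onto $B$ once $W$ is banned. I will take $S$ large relative to $c$, $\gamma=1/\sqrt{S}$, and $T=2S^2$.

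I would first exhibit the two witness policies. Let $\mathcal{A}_2$ keep all of $W$: then $\mathcal{D}_1$ piles its whole mass on $w_1$ while $\mathcal{D}_2$ spreads over $W$, so $p(k,\mathcal{A}_2)\le \gamma+k(1-\gamma)/T\le 2\gamma<1/c$ for every $k\le 2S$; this is the policy good for large $k$. Let $\mathcal{A}_1$ ban all of $W$ (keeping $A,B$): then $\mathcal{D}_1$ spreads uniformly over $A$ (mass $\gamma/S$ each) and $\mathcal{D}_2$ collapses uniformly onto $B$ (mass $(1-\gamma)/S$ each), giving $p(1,\mathcal{A}_1)=(1-\gamma)/S\le 1/S$; this is the policy good for $k=1$. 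A direct computation then shows that at $k=1$ the policy $\mathcal{A}_1$ beats $\mathcal{A}_2$ by more than a factor $c$ (ratio $\approx\sqrt{S}>c$), while at $k=2S$, where $\mathcal{A}_1$ has exhausted its support and $p(2S,\mathcal{A}_1)=1$, the policy $\mathcal{A}_2$ beats $\mathcal{A}_1$ by more than $c$ (since $2\gamma<1/c$).

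The heart of the argument is a clean dichotomy covering \emph{every} policy $\mathcal{A}\subseteq\PasswordSpace$. If $\mathcal{A}$ keeps at least one $w_i$, then because all of $\mathcal{D}_1$ shares the identical $W$-prefix, its entire mass $\gamma$ lands on the most preferred surviving pivot, so $p(1,\mathcal{A})\ge\gamma$; comparison with $\mathcal{A}_1$ at $k=1$ shows $\mathcal{A}$ is $c$-beaten. If instead $\mathcal{A}$ bans all of $W$, the common-tail structure caps the support: each of the $S$ types of $\mathcal{D}_1$ lands on its private $a_j$ or (sharing one tail order) on a single common tail password, and likewise the $S$ buckets of $\mathcal{D}_2$, so the induced distribution is supported on at most $2S$ passwords and $p(2S,\mathcal{A})=1$; comparison with $\mathcal{A}_2$ at $k=2S$ shows $\mathcal{A}$ is $c$-beaten. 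Hence every policy is beaten by a factor exceeding $c$ at $k=1$ or at $k=2S$, which is exactly the claim (taking $\mathcal{A}'\in\{\mathcal{A}_1,\mathcal{A}_2\}$).

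I expect the main obstacle to be engineering the construction so that the two requirements needed for the dichotomy coexist. On one hand, removing $\mathcal{D}_1$'s concentration must require banning \emph{all} of $W$ (this forces the identical-prefix design, so that partial bans never lower the head below $\gamma$); on the other hand, banning all of $W$ must \emph{not} let the mass scatter freely (this forces the common-tail gadget capping the support at $2S$, rather than letting users diffuse over fresh passwords, which would produce a policy simultaneously good for all $k$ and destroy the impossibility). Balancing $\gamma$ so the forced head is large compared to $1/S$ yet small compared to $1/c$, while choosing $T$ large enough that $\mathcal{A}_2$'s tail is negligible through $k=2S$, is the delicate part; the choices $\gamma=1/\sqrt{S}$, $T=2S^2$, and $S=\Theta(c^2)$ are what make all the factor-$c$ gaps go through.
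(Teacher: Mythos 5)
Your proposal is correct and takes essentially the same approach as the paper's proof: a two-population mixture in which the insecure population's mass concentrates on a single surviving pivot of $W$ unless \emph{all} of $W$ is banned, while banning all of $W$ collapses the induced distribution onto a small support, giving the same dichotomy with the same two witness policies (ban-all-of-$W$ compared at $k=1$, keep-$W$ compared at $k$ equal to the support size). The differences are implementation details: the paper realizes the two populations via random permutations (blocks $W_i$ for the insecure population, a uniform permutation of $\mathcal{P}$ for the secure one) and caps the support with a small residual set $X$, whereas your deterministic common-prefix and common-tail gadgets achieve the same two properties --- and in fact concentrate the full mass $\gamma$ on a single password rather than spreading it over a block of size $t$, which makes the factor-$c$ gap at $k=1$ go through more cleanly than in the paper's sketch.
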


\begin{proof}(sketch) 
Let $\PasswordSpace = W \cup X$ where $W= \bigcup_{i=1}^{r} W_i$ --- $W_i = \{w_{i,1},\ldots,w_{i,t}\}$ --- and $X= \{x_1,\ldots,x_{L}\}$ are two disjoint sets of passwords, where the parameters are set as follows
$q = \frac{1}{2c}$, $t=L = \log N$ and $r = \frac{N-L}{t}$.
 Our distribution over preference lists is given by algorithm \ref{alg:DistributionImpossible}.

\begin{algorithm}
\caption{Sample $\mathcal{D}$}
\begin{algorithmic}
\State {\bf Input:}
\State Parameters $L, r, q, t$
\State Random Number  $u \in [0,1]$. 
\State Random Permutation $\pi_i$ of $W_i$ for each $i \in \{1,...,r\}$
\State Random Permutation $pi_X$ over $X$
\State Random Permutation $\pi_\PasswordSpace$ of $\PasswordSpace$ 
\State {\bf Initialize: } $\ell \gets$ empty ranking
\If{$u \leq q$ } 
\Comment{Select from $\mathcal{D}_1$}
 \For{$i = 1 \to r$}
 \State $\ell \gets \langle\ell,\pi_{10^r}\rangle$
 \Comment{Append random permutation of $W_i$}
 \EndFor
 \State $\ell \gets \langle \ell , \pi_X \rangle$
 \Comment{Append random permutation of $X$}
\Else
\Comment{Select from $\mathcal{D}_2$}
\State $\ell \gets \pi_\PasswordSpace$
\EndIf
\Return $\ell$
\end{algorithmic}
\label{alg:DistributionImpossible}
\end{algorithm}

There are two cases to consider:

{\em Case 1:} $\exists x \in W-\mathcal{A}$ then it is easy to see that 
\[ p\left(1,\mathcal{A}\right) \geq \frac{q}{t} = \frac{2c}{t} = \frac{2c}{L} \geq 2c \times p\left(1,X\right)   \ . \]

{\em Case 2:} Suppose that $\forall x \in W$ we have $x \notin \mathcal{A}$ and consider $k=L$ with the solution $\PasswordSpace$ --- don't ban any passwords. For the solution $\PasswordSpace$ we have \[ p_i = \frac{q}{t}+\frac{1-q}{\left|X\right|+\left|W\right|} \ ,\] for $i \leq t$ (e.g., for the $t$ the passwords in $W_1$), and \[p_i = \frac{1-q}{\left|X\right|+\left|W\right|} \ , \]
for $i >  t$.

\begin{eqnarray*}
c \times p\left(k,\PasswordSpace\right) &=&c\sum_{i=1}^t \left(\frac{q}{t} + \frac{1-q}{|X|+|W|} \right) + c\sum_{i=t+1}^k \frac{1-q}{|X|+|W|}  \\
&=& c\left(q+\left(1-q\right)\frac{L}{L+10^r} \right) \\
&=& \frac{1}{2} + \left(c-\frac{1}{2}\right)\frac{L}{L+10^r} \\
&<& 1 =  p\left(k,\mathcal{A}\right) \ .
\end{eqnarray*}
\end{proof}

\section{Experiment Rules} \label{subsec:experimentsetup}

We selected rules based on common types of rules used in constructing password composition policies, e.g., the policies recommended by NIST~\cite{burr_electronic_2006}. The rules we selected are shown in Table~\ref{tab:ExperimentRules}. Positive and negative forms of each rule are shown. In the positive rules setting, a password is allowed if it matches any positive rule. In the negative rules setting, a password is banned if it matches any negative rule.

\begin{table}[htb]
\centering
 \tbl{Rules Used in Sampling Experiments}{
\begin{tabular}{| p{1.6in} | p{1.6in} | p{1.2in} |}
\hline
{\textbf{Positive Rule}} & {\textbf{Negative Rule}} & {\textbf{Details}} \\
\hline
8 characters or more & Less than 8 characters & \multirow{9}{*}{Length rules} \\
\cline{1-2}
9 characters or more & Less than 9 characters & \\
\cline{1-2}
10 characters or more & Less than 10 characters & \\
\cline{1-2}
11 characters or more & Less than 11 characters & \\
\cline{1-2}
12 characters or more & Less than 12 characters & \\
\cline{1-2}
13 characters or more & Less than 13 characters & \\
\cline{1-2}
14 characters or more & Less than 14 characters & \\
\cline{1-2}
15 characters or more & Less than 15 characters & \\
\cline{1-2}
16 characters or more & Less than 16 characters & \\
\hline
1 digit or more & Less than 1 digit & \multirow{8}{*}{Character class rules}\\
\cline{1-2}
1 symbol or more & Less than 1 symbol & \\
\cline{1-2}
1 lowercase or more & Less than 1 lowercase & \\
\cline{1-2}
1 uppercase or more & Less than 1 uppercase & \\
\cline{1-2}
2 digits or more & Less than 2 digits & \\
\cline{1-2}
2 symbols or more & Less than 2 symbols & \\
\cline{1-2}
2 lowercase or more & Less than 2 lowercase & \\
\cline{1-2}
2 uppercase or more & Less than 2 uppercase & \\
\hline
In a dictionary & Not in a dictionary & \multirow{2}{*}{Dictionary checks} \\
\cline{1-2} 
Contains a dictionary word & Does not contain a dictionary word &  \\
\hline 
8 characters or more AND 1 uppercase or more & Less than 8 characters OR less than 1 uppercase & \multirow{2}{*}{Combination Rules} \\
\cline{1-2} 
8 characters or more AND 1 uppercase or more AND 1 digit or more & Less than 8 characters OR less than 1 uppercase OR less than 1 digit & \\
\hline
\end{tabular}}
\label{tab:ExperimentRules}
\end{table}

The dictionary check used the cracking dictionary from \url{openwall.com}. This dictionary is used by one of the most well-known password crackers, John the Ripper~\cite{JTR}. Since this dictionary contains all alphabetic strings up to size 3, it was pruned to only include entries of 4 characters or more for the ``contains a dictionary word'' dictionary check.

Notice that for some groups of rules, e.g., length rules, digit rules, etc., the subsets defined by these rules are subsets or supersets of each other. For example, if the positive rule ``8 characters or more'' is in a policy, adding the ``10 characters or more'' rule yields the same policy. We did this to prevent the selection of overly complex policies, e.g., ``8 characters'' OR ``11 characters'' OR ``12 characters'' OR ``14 characters.'' However, we also selected a couple of ``combination rules'' to make policies more interesting.

\end{document}